\numberwithin{equation}{section}
\theoremstyle{plain}
\newtheorem{thm}{Theorem}[section]
\newtheorem{lem}{Lemma}[section]
\newcommand{\R}{\mathbb R }
\def\1{1\!{\rm l}}
\begin{document}

\begin{frontmatter}
\title{Overfitting hidden Markov models with an unknown number of states}
\runtitle{Overfitting hidden Markov models}

\begin{aug}
\author{\fnms{Zo\'{e}} \snm{van Havre}\thanksref{addr1}\thanksref{addr2}\thanksref{e1} \ead[label=e1]{zoevanhavre@gmail.com}}, 
\author{\fnms{Judith} \snm{Rousseau}\thanksref{addr2}\ead[label=e2]{rousseau@ceremade.dauphine.fr}\ead[label=u1,url]{https://www.ceremade.dauphine.fr/}}, 
\author{\fnms{Nicole} \snm{White}\thanksref{addr1}},
\and
\author{\fnms{Kerrie} \snm{Mengersen}\thanksref{addr1}}

\runauthor{Z. van Havre et al.}

\address[addr1]{ARC Centre of Excellence for Mathematical and Statistical Frontiers (ACEMS), Queensland University of Technology (QUT), Australia
}

\address[addr2]{Centre De Recherche en Math\'{e}matiques de la D\'{e}cision (CEREMADE),  Universit\'{e} Paris-Dauphine, France 
}

\thankstext{t1}{Contact: \printead{e1} }

\end{aug}

\begin{abstract}
 This paper presents new theory and methodology for the Bayesian estimation of overfitted hidden Markov models, with finite state space. The goal is then to achieve posterior emptying of extra states. A prior configuration is constructed which favours configurations where the hidden Markov chain   remains ergodic although it empties out some of the states. Asymptotic posterior convergence rates are proven theoretically, and demonstrated with a large sample simulation. The problem of overfitted HMMs is then considered in the context of smaller sample sizes, and due to computational and mixing issues two alternative prior structures are studied, one commonly used in practice, and a mixture of the two priors.  The Prior Parallel Tempering approach of \citet{vanHavre2015} is also extended to HMMs to allow MCMC estimation of the complex posterior space. A replicate simulation study and an in-depth exploration is performed to compare the three priors with hyperparameters chosen according to the asymptotic constraints alongside less informative alternatives. 
\end{abstract}


\begin{keyword}
\kwd{Hidden Markov model}
\kwd{overfitting}
\kwd{order Estimation}
\kwd{asymptotic convergence} \kwd{ergodicity}
\kwd{MCMC} \kwd{parallel tempering} \kwd{label switching}
\end{keyword}

\end{frontmatter}
 
\section{Introduction}\label{intro}
    Finite state space Hidden Markov Models (HMMs) arise when observations from a mixture of distributions depend on an unobserved (hidden) Markov chain. HMMs provide a framework for identifying and modelling homogeneous sub-sequences in data which display global heterogeneity, and as such are widely applicable in areas from DNA segmentation \citep{Churchill1989} to economic analyses \citep{Hamilton1989}.

In an HMM, the observed time series $y_{1:n}=\{ y_1, \dots, y_n\}$ depends on a single realisation of the underlying stochastic process determined by the unobserved states $x_{1:n}=\{x_1, \dots, x_n\}$:
 \begin{equation}
\label{model1 }
\forall t \leq n; \quad   [Y_t | X_t  = x] \sim g_{\gamma_{x}} \quad  \gamma \in \Gamma \subset \mathbb{R}^d, \quad X_t \in \mathbf{X}=\{ 1, \dots, K \}
\end{equation}
 where $(x_t)_{t\geq 1}$ is the realisation of a Markov chain with  $K$ states and transition matrix $Q=(q_{i,j})_{1 \leqslant i, j\leqslant K }$. 
The Markov chain is also associated with a stationary distribution which contains the long term state probabilities, $\mu_Q$ satisfying $\mu_Q Q = \mu_Q $.

Estimation of $Q$ and $(x_t)_{t \geq 1}$ is straightforward when the number of states ($K$) is known; detailed reviews can be found in \citet{Fruhwirth-Schnatter2006}, \citet{Cappe2005}, and \citet{Scott2002}. In a Bayesian context, MCMC estimation is particularly straightforward when conditionally conjugate priors are placed on the emission parameters, and this is combined with a data augmentation approach \citep{Chib1996, Tanner1987}. In such a set-up, the MCMC simply iterates between updating the transition probabilities given some estimate of the allocations ($Q|x_{1:n}, y_{1:n}$), as well as updating the emission parameters separately for each state ($\gamma_1, \cdots, \gamma_K|x_{1:n}, y_{1:n})$; this in turn leads to a new estimate of the posterior allocations  $(x_{1:n}| \gamma_i, i \leq K, Q, y_{1:n})$.

A more general setting is when the number of states $K$ is unknown and must be estimated. This problem of order estimation is notoriously difficult in the frequentist setting; \citet{Gassiat2000} show that the likelihood ratio statistic is unbounded even for the simple case of comparing models with $K=1$ and $K=2$ states. \citet{Chambaz2009} and \citet{Gassiat2003} proposed a solution based on implementing heavy penalties in a maximum likelihood setting, while \citet{Gassiat2002} employed penalised marginal pseudo-likelihood to obtain weakly consistent estimators for $K$.

In a Bayesian settings,  order estimation methods have included Reversible Jump Markov Chain Monte Carlo (RJMCMC) \citep{Richardson1997,Boys2004},
variational Bayes methods \citep{McGrory2009}, sequential inference methods \citep{Chopin2001}, Bayes factors \citep{Han2001,Friel2008}, and non-parametric methods \citep{Beal2002,Ding2010}.
\citet{Teh2006} propose a fully non-parametric framework in terms of both $K$ and the emission distributions, the Hierarchical Dirichlet
Process HMM (HDP-HMM), and adopt a model averaging approach over increasingly complex models to reach conclusions. Unfortunately, the HDP-HMM is not well suited for order estimation as  small variations in the data are known to cause this model to overestimate the number of groups as well as the frequency of transitions, resulting in a Markov chain which transitions extremely frequently between close, arbitrarily defined groups \citep{Fox2008}.
\citet{Fox2008} extend this approach to obtain a more sparse posterior, introducing the \textit{Sticky} HDP-HMM, where a small positive quantity is added to the prior on the diagonal of the transition matrix ($Q$) to increase the probability that the system will stay in its current state.

The main issue with order estimation stems from the non-identifiability which occurs when more states are included in a hidden Markov model than are supported by the observations. This is called \textit{overfitting} and it is an implicit aspect of any order estimation method, whether it must explore overfitted space via MCMC (such as RJMCMC), or fit at least one overfitted model as part of a comparison. Overfitting is becoming well understood in the case of finite mixtures, which can be considered simple, time-independent HMMs. A mixture model with $K^*$ groups can be always be explained equally well by one with $K$ groups, where $K>K^*$ the extra states are either empty or merged with the true groups, or some combination thereof \citep{Rousseau2011}. Asymptotic results by \citet{Rousseau2011} have proven that the prior on the mixture weights determines the posterior behaviour of extra groups, birthing a growing body of methods which directly use overfitting for order estimation in mixtures. Overfitting with the goal of \textit{emptying} extra groups for order estimation was further developed by \citet{vanHavre2015}, where the authors investigated priors which encourage extra groups to have posterior weights approaching zero. While the use of such priors would normally inhibit a Gibbs sampler from adequately exploring the posterior space, a parallel tempering approach on the prior terms was included which can obtain a well mixed sample from the desired target space.

In the case of finite mixture models, better control over the posterior configuration of extra states in overfitted models leads naturally to order estimation methods, and achieving posterior emptying of these states has several benefits. First, it results in a parsimonious description of the data. Second, the supported components are clearly differentiable from those deemed unnecessary by the model, as they are allocated no observations. Third, a single model is needed which only needs to be specified in a general form; for example, there is no need to design specific MCMC moves which create new components or take away components (known as birth, death, merge, and split moves), as this behaviour occurs naturally over the course of the sampler. Fourth, as the number of components is fixed and finite, the model is fully parametric and straightforward to estimate with relatively simple MCMC techniques \citep{vanHavre2015}.

While HMMs are closely related to finite mixture models, it is not as straightforward to obtain similar results for overfitted HMMs. In addition to the general non-identifiability issue, these models feature an additional complication: the hidden states are not independent of each other. 
When the number of states is known, \citet{Gunst2009} provide asymptotic consistency results for HMMs, but when this is not the case the literature is limited. \citet{Gassiat2012} have published the only result on asymptotic posterior convergence rates for overfitted, parametric, finite state space hidden Markov models. The authors find that the dependence between the states leads the neighbourhoods of the true parameter values to contain transition matrices which lead to non-ergodic Markov Chains, corresponding to areas of bad Markov behaviour.

While no theoretical result exists for inducing posterior emptying of overfitted finite state space HMMs, there are several indications in the literature that this is possible. Variational Bayes methods appear to depend on this through the so called ``state-removal phenomenon'', with no underlying theory \citep{McGrory2009}. In the case of the particle filter of \citep{Chopin2001}, the model considers only states which appear in the posterior sequence of hidden states, which is equivalent to ignoring empty groups. In the non-parametric setting, there is a growing body of literature in genetics where HMMs have been used for DNA segmentation. For example, \citet{Boys2000} and \citet{Nur2009} employ the familiar $\mathcal D( \alpha_{i1}, \cdots, \alpha_{iK})$  prior on row $i$ of the transition matrix $Q$, but specify an alternative structure to the standard $\alpha_{ij}= \alpha$ for all $i,j \leq K$. Instead, they assume it is unlikely one can detect short segments, except when searching for a state with known parameters or when there are many short segments from a particular state. The prior on row $j$ of the transition matrix is set to $\mathcal{D}( a,a,\cdots,a,d,a,\cdots, a )$, where $d$ is the $j$`th element and is larger than the exchangeable off-diagonal elements (so that $E(Q_{k,k})\rightarrow  1$). This is applied to HMMs with an unknown number of states in \citet{Boys2004}, as part of a Reversible Jump MCMC algorithm, which explores overfitted space. While this has been done with little supporting theory, it is interesting that this prior structure is seen repeatedly in this field. Since the results of these publications do not appear to contain large numbers of spurious components, there is some evidence they are successfully dealing with extra states by encouraging them to empty out.

The overall aim of this paper is to provide new theoretical results and develop methodology for overfitting HMMs with an unknown number of states, where extra states are allocated no observations \textit{a posteriori}. To this end, we consider three objectives. The first is to develop new asymptotic theory for overfitted hidden Markov models. In Section \ref{Section1}, new asymptotic results are provided for the posterior distribution of overfitted finite state space HMMs, and it is shown that the posterior of the stationary distribution of extra states can be confined to be arbitrarily small, yet remain ergodic, by implementing certain prior restrictions on the distribution of the transition probabilities. The second objective is to investigate the impact of the asymptotic constraints, as well as a more relaxed form of the prior, in a large sample context for univariate Gaussian HMMs. This is undertaken in Section \ref{Section2}, where a large sample simulation demonstrates that posterior emptying of extra states is possible in practice under certain prior constraints. 
Thirdly, the final objective is to explore the applicability of the asymptotic theory in the context Gaussian HMMs with small sample sizes, reflective of common HMM applications (in the order of hundreds). This is approached in two parts. First, in Section  \ref{Section3}, computational problems are identified and, as a consequence, three updated prior choices for overfitting Gaussian HMMs are proposed.  Mixing difficulties in the MCMC are addressed by extending the Prior Parallel Tempering algorithm of \citet{vanHavre2015} to the HMM case, allowing for simultaneous sampling of a family of overfitted posteriors.  Secondly, a simulation study is undertaken in Section \ref{Section4} to evaluate the proposed priors for overfitted Gaussian HMMs with small sample sizes.

\subsection{Notations and set-up} \label{sec:notations}

  We assume that the observations $y_ = (y_1, \cdots, y_n) $ are distributed according to model \eqref{model1 }  and we set $\mu$ as a prior initial distribution for the hidden Markov chain  $X_t=(x_1, \cdots, x_n)$. The conditional distribution of $y_t $ given $x_t = j$ is $G_{\gamma_j}$, which is absolutely continuous with respect to some fixed measure $\lambda$ with density $g_{\gamma_j}$.
  We set $\theta = (Q, \gamma_1, \cdots, \gamma_K) \in \mathcal Q_K\times \Gamma^K = \Theta_K$ where
  $$\mathcal Q_K = \{ Q= (q_{i,j})_{i,j\leq K}); \, \sum_{j=1}^K q_{i,j}=1, \, q_{i,j}\geq 0 \, \forall \, i,j \}.$$

  We also denote by $\mu_Q$ the (or one of the) stationary distribution associated to $Q$, i.e. the probability distribution on $\{1, \cdots, K\}$ satisfying
  $\mu_Q = \mu_Q Q $.

  We recall that for any Markov chain on a finite state-space with transition matrix $Q$, and stationary distribution $\mu_Q$ (one of them if $Q$ admits more than one stationary distribution), it is possible to define  $\rho_Q \geq 1$ such that for any $m$, any $i \leq K$
  \begin{equation*}
  \sum^k_{j=1} |(Q^m)_{ij} -\mu_{Q}(j)| \leq \rho^{-m}_Q, \quad \rho_Q=\left(1- \sum_{j=1}^K \min_{1\leq i\leq K}q_{i,j}  \right)^{-1}
  \end{equation*}

  The complete likelihood conditional on $X_1 =x_1$ is given by
  $$f_n(y_{1:n},x_{2:n}|\theta , x_1) = g_{\gamma_{x_1}}(y_1) \prod_{i=1}^{n-1} q_{x_i,x_{i+1}} g_{\gamma_{x_{i+1}}}(y_{i+1})$$
  and the likelihood conditional on $X_1 = x_1$, defining $x_{2:n}=(x_2, \cdots,x_n)$, is given by
  \begin{equation*} 
  f_n(y_{1:n}|\theta,x_1) = \sum_{x_{2:n}} f_{n}(y_{1:n},x_{2:n}|\theta,x_1).
  \end{equation*}
  
  We also write
   $$f_n(y_{1:n}|\theta,\mu) = \sum_{x_1=1}^K f_n(y_{1:n}|\theta,x_1)\mu(x_1)$$
   and $\ell_n(\theta, x_1) = \log f_n(y_{1:n}|\theta,x_1)$ and  $\ell_n(\theta, \mu) = \log f_n(y_{1:n}|\theta,\mu) $ .

  In this paper we study  the behaviour of  posterior distributions associated to priors belonging  to the following family :
  \begin{itemize}
  \item (C1) Prior on $Q$ : the rows $Q^i $ are independent and identically distributed according to a Dirichlet $\mathcal D(\alpha_1, \cdots, \alpha_K) $, with $\alpha_j>0$, $j \leq K$.
  \item  (C2)  Independent prior on the $\gamma$'s : $\gamma_j \stackrel{iid}{\sim} \pi_\gamma$ with positive and continuous density on $\Gamma$.
  \end{itemize}
  We denote by $\pi$ the above prior distribution and $\pi( . | y_{1:n})$ the corresponding posterior distribution, so that
  \begin{equation*}
  \pi( d \theta|  y_{1:n}) = \frac{ f_n(y_{1:n}|\theta,\mu) \pi(d\theta) }{ \int_\Theta f_n(y_{1:n}|\theta,\mu) \pi(d\theta) }.
  \end{equation*}

  We denote by $F(h) = \int h(x) dF(x) $ for every probability measure $F$ and integrable function $h$, also $\nabla f$  denotes the gradient of $f$ and $D^2 f$ its second derivatives.

  In the following section we study the asymptotic behaviour of the posterior distribution under some specific configurations of $(\alpha_1, \cdots, \alpha_K) $, when the true parameter corresponds to a HMM with $K^* < K$ hidden states. In the $K^* $- parameter space $\Theta_{K^*}$ we write $\theta^*$ the parameter, i.e.
  $\theta^* = (Q^*, \gamma_1^*, \cdots, \gamma_{K^*})$ with $Q^*\in \mathcal Q_{K^*} $ and $\gamma_j^* \in \Gamma$.
   The true underlying model can then be parametrized by infinitely many parameters in the $K$-parameter space $\Theta_K$. In particular any parameter of the form 
   $(\gamma_1^*, \cdots, \gamma^*_{K^*}, \gamma^*_{K^*}, \cdots , \gamma^*_{K^*}) \in \Gamma^K $
   and $Q$ with
   $q_{i,j} = q_{i,j}^* $ if $i \leq K^*$, $j \leq K^*-1$, $\sum_{j=K^*}^{ K}q_{i,j}= q_{i,K^*}$ and $q_{i,. }  = q_{K^*,.}$ for all $i\geq K^*+1$ leads to the same likelihood
  function $f_{n}(y_{1:n}|\theta^*,\mu) $, for all $\mu$. The parameters $\theta \in \Theta^K $ defined  by
  $$Q = \left( \begin{array}{cccc}
  Q^* & 0 & \cdots & 0\\
  R & 0 & \cdots & 0
  \end{array}\right) $$
  where for all $i =K^*+1, \cdots, K$, $R_{i,1}= 1$  and $R_{i,j}=0$  if $j \geq 2 $ and $\gamma_j = \gamma_j^*$ for all $j\leq K^*$ lead to the same likelihood function for all $\mu $ having support in $\{1, \cdots, K^*\}$. If $\mu(j)>0$ for some $j >K^*$, then the likelihood $f_{n}(y_{1:n}|\theta^*,\mu) $ is not the same however $f_n(y_{2:n} |\theta,x_2) = f_n(y_{2:n}|\theta^*,x_2) $ for all $x_2$, so that $f_n(y_{1:n}|\theta,\mu_Q)= f_n(y_{1:n}|\theta^*,\mu_{Q^*})$.

  We denote by  $ \Theta^* \subset \Theta_K$ the set of all $\theta$ such that either  $f_n(y_{1:n}|\theta,\mu_Q)= f_n(Y|\theta^*,\mu_{Q^*})$ or $f_n(y_{1:n}|\theta,\mu)= f_n(y_{1:n}|\theta^*,\mu)$  for all $n$.

\section{ Asymptotic behaviour of the posterior distribution } \label{Section1}
    In \citet{Gassiat2012}, posterior concentration for HMMs models \eqref{model1 } has been obtained in terms the $L_1$ distance between the stationary marginal distributions of two consecutive observations and an estimator of the number of components has been proposed. In the special case where $K=2$ the authors also prove that the posterior distribution concentrates on the configuration where the extra component merges with the \textit{true} one, under some conditions on the prior of the transition matrix. In this section, we are interested in finding some sufficient conditions on the prior to ensure that the posterior distribution concentrates on the configuration where the extra states are emptied out when the number of observations goes to infinity.

    We consider the following regularity conditions on $g_\gamma$, resembling those of \citet{Rousseau2011} and \citet{Gassiat2012}.

    \begin{itemize}
    \item [A1]  \textit{Regularity}: The model $\gamma \in \Gamma \rightarrow g_\gamma$ is twice continuously differentiable and regular in the sense that
     for all $\gamma \in \Gamma $ the Fisher information matrix associated with the model $g_\gamma$ is positive definite at $\gamma$.
    For all $i \leq K^*$, there exists $\delta>0$ such that for all
     $$G_{\gamma_j^*} \left( \sup_{|\gamma-\gamma_j^*| <\delta }  |\nabla_\gamma \log g_{\gamma} |  \right) <+\infty, \quad
    G_{\gamma_j^*}  \left(  \sup_{|\gamma-\gamma_i^*|\leq \delta} |D^2_\gamma \log  g_{\gamma}| \right) <+\infty, $$
    and for $r \in \{1,2\}$,
    $$\int  \sup_{|\gamma - \gamma_j^*|<\delta}\left| D^r g_{\gamma}(y) \right|   d\lambda(y) < +\infty. $$

    Assume also that for all $i=1,...,K^*$ $\gamma_{i}^* \in \mbox{int}(\Gamma)$ the interior of $\Gamma$.

     \item [A2] \textit{Offset}: There exists an open subset $\Gamma_ 0 \subset \Gamma$ satisfying $\mbox{Leb}(\Gamma_0) >0$,
     and for all $i \leq K^*$
    $$d(\gamma_i^0 , \Gamma_0) = \inf_{\gamma \in \Gamma_0} |\gamma - \gamma_i^0| > 0$$
     and such that, there exists $\delta>0$ such that
    $$ \left\| \frac{  \sup_{\gamma \in \Gamma_0} g_\gamma(.) }{ \max_{i\leq K^*} \inf_{|\gamma' - \gamma_i|<\delta} g_{\gamma'}(.)} \right\|_\infty<  +\infty. $$

     \item [A3] \textit{Stronger identifiability}:  For any  $\mathbf{t}=(t_{1},\ldots,t_{k_{0}})\in T$,
    any $(\pi_i)_{i = 1}^{k - t_{K^*}} \in (\R^+)^{ k - t_{K^*}}$ (if $t_{K^*} <k$), any $(a_i)_{i=1}^{K^*}, (c_i)_{i=1}^{K^*} \in \R^{K^*}$, $(b_i)_{i=1}^{K^*}\in (\R^d)^{K^*}$, any $z_{i,j} \in \R^d,\; \alpha_{i,j}\in\R,\; i=1,\ldots,K^*, j=1,\ldots, t_i - t_{i-1}$, with $t_0= 0$ such that $\| z_{i,j}\| = 1$, $\alpha_{i,j}\geq 0$ and $\sum_{j=1}^{t_i - t_{i-1}}\alpha_{i,j}=1$,
     for any $(\gamma_i)_{i=1}^{k - t_{K^*}}$ which belong to $\Gamma \setminus \{ \gamma_i^*, i=1,\ldots,K^*\}$,
    \begin{eqnarray} \label{iden:1}
    \sum_{i=1}^{k - t_{K^*}} \pi_i g_{\gamma_i} + \sum_{i=1}^{K^*} \left( a_i g_{\gamma_i^*} + b_{i}^{T}D^1g_{\gamma_i^*}\right) + \sum_{i=1}^{K^*}  c_i^2\sum_{j=1}^{t_{i}-t_{i-1}} \alpha_{i,j} z_{i,j}^{T}D^2 g_{\gamma_i^*} z_{i,j} = 0,
    \end{eqnarray}
     if and only if $$ a_i = 0, b_i = 0, c_{i}=0   \quad \forall i=1,\ldots,K^*,   \quad \pi_i=0 \quad \forall i = 1,\ldots,k-t_{K^*}.$$

    \end{itemize}

    \begin{thm}\label{th:postempty}
    Consider the model \eqref{model1 } with prior defined by conditions (C1) and (C2). Assume that the true model is a HMM on a $K^*<K$ hidden states with true parameter $\theta^* \in \Theta_{K^*}$ defined by  $Q^* = (q_{i,j}^*) \in \mathcal Q_{K^*}$ with $q_{i,j}^*>0$ and $(\gamma_1^*, \cdots, \gamma_{K^*}^*)$ and that the regularity conditions [A1]-[A3] are satisfied.
    If there exists $1 \leq p \leq K^* $ such that $\alpha_1= \cdots = \alpha_p = \bar \alpha $ and $\alpha_{p+1} = \cdots = \alpha_K = \underline \alpha $ satisfying

{\footnotesize
    \begin{equation}\label{cond:alpha}
    \begin{split}
    &  p\bar \alpha + (K-p) \underline \alpha > \frac{ (K^*(K^*-1+d)+ \underline \alpha K( K-K^*))(K^*(d+K^*-1)+ \underline{\alpha}(K^*+1)(K-K^*-1)+ d/2)  }{ d/2 - \underline \alpha[ (K-K^*)^2 -(K-2K^*-1)]} \\
    & d/2 > \underline \alpha ((K-K^*)^2 - (K-2K^*-1)) \nonumber
    \end{split}
    \end{equation}
}

    then setting $A_1= K(K-K^*)\underline \alpha + K^*( K^*-1+d)$ and $A = A_1/ (p\bar \alpha + (K-p)\underline{\alpha} )$, for any $M_n$ going to infinity,
    \begin{equation}\label{post:th:1}
    \begin{split}
     &\pi( \min_{\sigma\in \mathcal S_K}  \sum_{j=K^*+1}^Kp_{\sigma(j)} > M_n v_n | y_{1:n}) = o_p(1), \text{ and} \\
     &v_n = n^{-1/2[( 1-A)B- A_1]/(d/2 + \underline \alpha (K-2K^*-1))}(\log n)^{B/( d + 2 \underline \alpha (K-2K^*-1))}
    \end{split}
    \end{equation}
    with $B = K^*(d+K^*-1)+ \underline{\alpha}(K^*+1)(K-K^*-1)+ d/2$.
    \end{thm}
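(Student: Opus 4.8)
The plan is to follow the two–sided marginal–likelihood strategy pioneered by \citet{Rousseau2011} for overfitted mixtures and adapted to the dependent setting by \citet{Gassiat2012}: I would bound the ratio of the evidence restricted to the ``large extra weight'' event $\{\min_{\sigma}\sum_{j=K^*+1}^K p_{\sigma(j)} > M_n v_n\}$ to the full evidence, and show it is $o_p(1)$. The first step is a reduction to a local neighbourhood. Using the $L_1$–concentration of the stationary bivariate marginals established in \citet{Gassiat2012}, together with the prior positivity guaranteed by (C1)--(C2) and the regularity A1, one shows that the posterior mass outside any fixed neighbourhood $\mathcal V$ of the set $\Theta^*$ defined in Section~\ref{sec:notations} is $o_p(1)$. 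This confines the analysis to configurations in which every emission parameter $\gamma_j$ is either close to one of the true values $\gamma_i^*$ or sits in the offset region $\Gamma_0$ of A2 while carrying only a vanishing stationary weight, and in which $Q$ is close to one of the limiting transition structures listed in the set–up.

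Next I would stratify $\mathcal V$ by the allocation pattern $\mathbf t=(t_1,\dots,t_{K^*})\in T$ of A3, recording how many overfitted states cluster around each true state and how many remain genuinely separated, and Taylor–expand $\ell_n(\theta,\mu)-\ell_n(\theta^*,\mu_{Q^*})$ to second order in the emission parameters on each stratum. The states merging with a given $\gamma_i^*$ contribute a score term and a quadratic term weighted by $c_i^2$, precisely in the form of \eqref{iden:1}; the strong identifiability A3 makes the associated quadratic form nondegenerate, so a merge of total weight of order $s$ is penalised in the log–likelihood at order $ns^2$, while the separated components are already excluded by the consistency step. This is the mechanism that converts excess stationary weight into a likelihood penalty.

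The core of the argument is then a Laplace–type evaluation of the two integrals against the Dirichlet prior (C1). For the denominator I would restrict to a neighbourhood of a fully emptied configuration and bound the evidence below; integrating the prior density $\prod p_j^{\alpha_j-1}$ against the local quadratic form produces a factor of order $n^{-A_1/2}$ up to logarithms, where $A_1=K(K-K^*)\underline\alpha+K^*(K^*-1+d)$ collects the exponent $\underline\alpha$ of each of the $K(K-K^*)$ off–support transition entries together with the effective dimension of the true block. Over the event in \eqref{post:th:1} the analogous computation yields the exponent $B$, the extra $d/2$ encoding the curvature cost of a merged direction as in \citet{Rousseau2011}. Taking the ratio and optimising over the radius $s$ of the extra weight gives the stated $v_n$: the first inequality in \eqref{cond:alpha} is exactly the requirement that the resulting exponent $(1-A)B-A_1$ be positive so that $v_n\to 0$, while the second inequality $d/2>\underline\alpha((K-K^*)^2-(K-2K^*-1))$ is the emptying condition $\underline\alpha<d/2$ of \citet{Rousseau2011} generalised to the HMM, ensuring both the integrability of the Laplace approximation and that emptying dominates merging.

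The principal difficulty, and the feature genuinely distinguishing this from the i.i.d.\ mixture case, is that $\ell_n$ is not a sum of independent terms and the neighbourhood $\mathcal V$ unavoidably contains transition matrices arbitrarily close to non–ergodic ones, as emphasised by \citet{Gassiat2012}. One must therefore control the second–order expansion and its stochastic fluctuations uniformly as $\rho_Q\to 1$, showing that the mixing bound $\sum_j|(Q^m)_{ij}-\mu_Q(j)|\le\rho_Q^{-m}$ degrades slowly enough that the Taylor remainder of $\ell_n$ stays negligible relative to the quadratic term, and that the constraints \eqref{cond:alpha} keep the posterior away from this bad region while still letting the extra stationary weights vanish. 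Establishing these uniform–in–$Q$ bounds on the conditional likelihood $f_n(y_{1:n}|\theta,x_1)$, and the attendant control of the discrepancy between an arbitrary initial law $\mu$ and $\mu_Q$, is where I expect most of the technical work to lie.
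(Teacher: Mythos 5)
Your overall architecture matches the paper's: a numerator/denominator decomposition of the posterior, a lower bound on the denominator of order $n^{-A_1/2}$ obtained by placing the extra emission parameters in the offset region $\Gamma_0$ and the extra transition entries at scale $n^{-1/2}$ (you correctly read off $A_1$ as the $\underline\alpha$-cost of the $K(K-K^*)$ off-support entries plus the effective dimension of the true block), and an upper bound on the event of interest obtained by stratifying over merging patterns, with the extra $d/2$ in $B$ coming from the curvature of a merged direction. Your reading of the two conditions in \eqref{cond:alpha} is also essentially the paper's. Two execution differences are worth noting: the paper does not Laplace-expand the likelihood in the numerator; it reduces $N_n(B_n)$ to the \emph{prior} mass $\pi(B_n)$ by Fubini/Markov and then computes that mass using the parameter-space inequality (26) of \citet{Gassiat2012} on the set where $\|f_2(\cdot|\theta)-f_2(\cdot|\theta^*)\|_1$ is small. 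Both routes are viable, but the prior-mass route avoids having to justify a uniform Laplace approximation in a dependent, near-degenerate model.

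The one concrete gap is your first reduction step. The concentration result of \citet{Gassiat2012} controls $\|f_2(\cdot|\theta,\mu_Q)-f_2(\cdot|\theta^*,\mu_{Q^*})\|_1\,(\rho(Q)-1)$, not the $L_1$ distance alone, and this bound is vacuous precisely where $\rho(Q)\to 1$, i.e.\ near non-ergodic transition matrices --- which are exactly the configurations in which extra states can carry non-negligible stationary mass without perturbing $f_2$. You acknowledge the non-ergodic region as the principal difficulty but propose to handle it by uniform control of the Taylor remainder of $\ell_n$; that does not repair the $L_1$ statement itself. The paper's fix is different and essential: it excises the set $\mathcal Q_n^c=\{Q:\sum_j\min_i q_{i,j}\le v_n\}$ by showing $\pi(\mathcal Q_n^c)=O(v_n^{p\bar\alpha+(K-p)\underline\alpha})$ is negligible against the denominator lower bound (this is condition \eqref{contrainte0}, which implies \eqref{cond:alpha}), and only on $\mathcal Q_n$ does one get $\|f_2-f_2^*\|_1\lesssim \sqrt{\log n}/w_n$ with $w_n=\sqrt n\,v_n$. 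This excision is not a technicality: the threshold $v_n$ appearing in the statement of the theorem, and hence the final rate, is calibrated exactly by balancing this prior-mass bound against $n^{-A_1/2}$ and against $u_n^Bv_n^{-d/2-\underline\alpha(K-2K^*-1)}$. Without this step your reduction to a neighbourhood of $\Theta^*$, and everything downstream of it, does not go through.
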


    Since $K_0$ is unknown, $\bar \alpha$ and $\underline \alpha$ have to be chosen in a conservative way, which corresponds, given the above equations to $K^* = K-1$ for the lower bound on $\bar \alpha$ and $K^*=1$ for the upper bound on $\underline \alpha$, so that \eqref{cond:alpha} becomes
    \begin{equation}\label{cond:alphaConservative}
    \begin{split}
    & d/2 > \underline \alpha (K^2 - 3K+4)\\
    & p\bar \alpha + (K-p)\underline \alpha > \frac{( (K-1)(K-2+d)+ \underline \alpha K)((K-1)(d+K-2)+ d/2)  }{ d/2 - \underline \alpha[K^2 - 3K+4]}
    \end{split}
    \end{equation}

    Assumptions \textbf{A1}-\textbf{A2} are very similar to those found in \citet{Rousseau2011}. The offset condition is slightly stronger since instead of a moment condition on $g_\gamma/g_{\gamma_i}$ for $\gamma \in \Gamma_0$, and $i\leq K^*$, it is required to be bounded however it is satisfied in many parametric families, by choosing appropriately $\Gamma_0$. For instance in the case of location - scale Gaussian distributions, with $\gamma = (\mu, \sigma)$ and $\gamma_i = (\mu_i,\sigma_i)$
    $$\frac{ g_\gamma}{ g_{\gamma_i}}(x) \propto e^{- \frac{x^2}{2}( \sigma^{-2}- \sigma_i^{-2})  + x( \mu\sigma^{-2}- \mu_i\sigma_i^{-2})}$$
    is bounded as soon as $\sigma< \sigma_i$. More generally for any exponential family with density in the form $h(x)e^{\gamma^t T(X) - \psi(\gamma)}$ assumption \textbf{A2} is satisfied as soon as for any fixed set $(\gamma_1, \cdots, \gamma_k)$ there exists an open set $\Gamma_0$ such that  for all $i$
     $\sup_{\gamma \in \Gamma_0} \sup_x (\gamma - \gamma_i)^t T(x) <+\infty$. This is satisfied in most cases, as in the cases of Gamma, Poisson, and others of the exponential family, for example. Note also that assumption \textbf{A2} is also verified in the case of student distributions with $\gamma$ possibly covering the location, scale and degrees of freedom parameters.

    The proof of Theorem \ref{th:postempty} is provided in the Supplementary Material (Appendix A), however we sketch below its main arguments.

    To prove Theorem \ref{th:postempty},  we build on Theorem 2 of \citet{Gassiat2012}, which states that under the above  conditions
    $$\pi \left( \|f_{2}(. | \theta,\mu_Q) - f_2(. |\theta^*,\mu_{Q^*})\|_1 (  \rho(Q) -1 ) \leq \sqrt{\log n}/\sqrt{n} \right) = 1 + o_p(1) .$$

    To eliminate the term $  \rho(Q) -1$, we decompose, for any measurable set $B \subset \Theta_K$,
    \begin{equation}
    \pi( B | Y )  = \frac{ \int_B e^{\ell_n(\theta, \mu) - \ell_n(\theta^*, \mu) }\pi(d\theta) }{ \int_\Theta e^{\ell_n(\theta, \mu) - \ell_n(\theta^*, \mu) }\pi(d\theta) } := \frac{ N_n(B) }{ D_n}
    \end{equation}
    and we derive a sharper lower bound of $D_n$ than in \citet{Gassiat2012} who obtained  $D_n \gtrsim n^{-K(K-1+d)}$, with probability going to 1.

    To lower bound $D_n$, we approximate $\ell_n(\theta^*,x_1)$ by $\ell_n( \theta, x_1)$ with $x_1 = 1$ and $\theta \in S_n$ with
    \begin{equation}\label{Sn}
    \begin{split}
    S_n &= \left\{ \theta = (Q, \gamma_1, \cdots, \gamma_K); |q_{i,j}- q_{i,j}^*| \leq 1 /\sqrt{n} ,\,  i,j \leq K^* ; \, q_{i,j} \in (1/(2\sqrt{n}), 1/\sqrt{n}), \, i,j >K^*,\right. \\
     & \qquad \left.  q_{i,j} \in q^o_{i,j}\pm e_n, i>K^*, j\leq K^*, |\gamma_j-\gamma_j^*|\leq 1/\sqrt{n} , |\gamma_j - \gamma_j^o | \leq e_n\right\}
     \end{split}
    \end{equation}
    where $e_n= o(1) $,  $q_{i,j}^o >0$ and satisfies $\sum_{j=1}^{K^* } q_{i,j}^o= 1$ for all $i>K^* $ and $\gamma_j^o \in \Gamma_0$ for $j>K^*$.
    We have
    \begin{equation}\label{piSn}
    \pi(S_n)  \gtrsim (n^{-1/2})^{K^* d + K(K-K^*)\underline \alpha + K^*(K^*-1) }e_n^{(K-K^*)(d+K^*-1)}
    \end{equation}
    and we show in Lemma A.1 (see Appendix A) that for all $\epsilon >0$ there exists $C_\epsilon>0$ such that   for all $\theta \in S_n$,
    \begin{equation*}
    \mathbb P \left( \ell_n(\theta, x_1=1) - \ell_n(\theta^*, x_1=1) \leq -C_\epsilon \right) < \epsilon.
    \end{equation*}
    This ensures that as soon as $\mu(1)>0$,
     $$ D_n \gtrsim n^{- [K^*( K^*-1+d) + \underline \alpha K (K-K^*) ]/2} e_n'$$
     for any $e_n' = o(1)$, with probability going to 1.

    The rest of the proof, then follows the same lines as in \citet{Gassiat2012}. We define $\mathcal Q_n \subset \mathcal Q$, by
    $$\mathcal Q_n = \{ Q\in \mathcal Q; \sum_j \min_i q_{i,j} > v_n \} $$
    then $\pi( \mathcal Q_n^c) = O(v_n^{\sum_i \alpha_i}) = O(v_n^{p\bar \alpha +(K-p)\underline \alpha })$
    and writing $v_n = n^{-1/2} w_n$ with $w_n \rightarrow +\infty $,  if
    \begin{equation}\label{contrainte0}
    w_n^{(p\bar \alpha+ (K-p)\underline \alpha)/2}  = o( (\sqrt{n})^{p\bar \alpha- \underline \alpha (  K(K-K^*-1) +p) - K^*( K^*-1+d) })
    \end{equation}
    then
    $$\pi( \mathcal Q_n^c | y_{1:n})= o_p(1) .$$
     Note that \eqref{contrainte0} implies \eqref{cond:alpha}.
    Under \eqref{cond:alpha},  then defining $A_n = \{ \|f_2(.|\theta^*,\mu^*) - f_2( .|\theta \mu) \|_1 \lesssim \sqrt{\log n} /w_n\}$
     $$\pi( A_n |y_{1:n} )= 1 + o_p(1), $$
    with $w_n$ satisfying \eqref{contrainte0}.

    The proof is achieved by bounding from above
     $ \pi ( B_n) $, where
     \begin{equation}\label{Bn}
     B_n = \{ \theta \in A_n; \sum_{j= K^*+1}^K \mu_Q(j) > M_nu_n \},
     \end{equation}
    which is done in Lemma A.2 (see Appendix A).

Theorem \ref{th:postempty} means that it is possible to achieve posterior emptying of extra states in HMMs by binding the posterior distribution of the $\mu_k$ associated with extra states to be small yet remain non-zero, thereby retaining the ergodicity of the estimated Markov chain. This behaviour is possible due to the structure of the prior, which is asymmetrical with respect to the hyperparameter values. It contains large values $\bar{\alpha}$ and smaller values of $\underline{\alpha}$, for which asymptotic constraints are provided and interpreted to produce conservative guidelines given in Equation \ref{cond:alphaConservative}. Note that these are probably not sharp, due to the complex nature of the overfitted posterior parameter space in HMMs.
The prior constraints depend on $d$, the number of free parameters in each state, and the total number of states in the model $K$, but also on $p$, which defines the number of $\bar{\alpha}$ values included in the prior. As $p>K^*$ must hold, $p$ must be set to the smallest reasonable value; in our simulations we have set $p=1$, corresponding to a noninformative set-up on the number of components.

\section{Verification: large sample simulation}\label{Section2}

A large sample simulation experiment is performed to demonstrate the newly discovered theoretical results.

\subsection{Methodology}
For the illustrative portion of this paper, we consider HMMs with normally distributed state specific distributions, 
\[ [Y_t|X_t=j] \sim \mathcal{N}(\gamma_{j}, 1) \quad X_t \in \mathbf{X}=\{ 1, \dots, K \}\]

A Gibbs sampler is set up on the augmented parameter space $p(x_{1:n}, \gamma, Q | y_{1:n})$ (See Eq.\ref{eq:HMMaugParSpace})  as described by \citep{Fruhwirth-Schnatter2006}.
The prior on the emission means is set to $\pi(\gamma)\sim \mathcal{N}(\gamma_0=\bar{y}, \tau_0=100)$, where $\bar{y}$ is the observed sample mean. The hyperparameter of the prior on the mean, $\gamma_0$, is set to the global mean, and the prior variance of all means set to 100. The prior on each row of $Q$ follows a Dirichlet distribution of the form $\mathcal{D}(\bar{\alpha}, \underline{\alpha}, \cdots, \underline{\alpha})$.  The sampler is run for $M=20,000$ iterations and the first $10,000$ are discarded. The details of this algorithm can be found in the Supplementary Material (Appendix B).

\begin{equation}\label{eq:HMMaugParSpace}
\pi(x_{1:n}, \gamma, Q | y_{1:n}) \propto p_n(y_{1:n}|x_{1:n},  \gamma ,Q )p(x_{1:n}|\gamma ,Q ) \pi(\gamma) \pi(Q)
\end{equation}


A sample of $10,000$ observations is simulated from a univariate Gaussian HMM with $K^*=2$ states, transition matrix
$Q^*=\left\{ \begin{array}{ll}
        0.6 & 0.4 \\
        0.7 & 0.3 \end{array} \right\}$,
leading to stationary distribution $\mu^*= \{0.64, 0.36\}$. The individual states are distributed according to a univariate Gaussian distribution with emissions means $\gamma^*=\{-1, 3\}$, and variances assumed to be known and equal to one.

This simulation is modelled with $K=4$ states to explore the effect of the prior. Nine combinations of hyperparameters are explored, using the values $\bar{\alpha}=Theory, \quad K,\text{ and }1$, and $\underline{\alpha}= \bar{\alpha}, \quad 0.01, \text{ and } \frac{1}{n}$.

\subsection{Results}
We refer to the number of occupied states at iteration $(m)$ as $K_A^{(m)}$, where $\hat{K_A}$ is its empirical mode (over $m=1,\cdots,M$). The distribution of the set $(K_A^{(1)}, \cdots,K_A^{(M)})$ is referred to as $P(K_A)$.

The structure of the prior on $Q$ resulted in both merging and emptying of extra states \emph{a posteriori} depending on the choice of $\underline{\alpha}$. Table \ref{HMM_Table_1} details $P(K_A)$ for each prior combination, and Figure \ref{HMM_Figure_1} includes a visual representation of the posterior distribution estimated by each model using two-dimensional density plots of $\mu_k\times \gamma_k$ for all states $k=1,\cdots, 4$.

A choice of $\underline{\alpha}=\bar{\alpha}$ caused all states to be occupied regardless of the value of $\bar{\alpha}$, and $P(K_0=4)=1$ in all cases. The upper row of plots in Figure \ref{HMM_Figure_1} show while all $\mu_k$ were non-negligible in this case, the posterior space created by each different value of $\bar{\alpha}$ was noticeably different. For $\bar{\alpha}=1$, extra states merged to some degree with the true states, and the number of modes and their means visibly reflected the two true states. A modest increase in $\bar{\alpha}$ (to$\bar{\alpha}=K=4$) did not cause a large change in this instance, but the right-hand mode became less distinct and began to split into two. For the largest value of $\bar{\alpha}$, three separate modes were created as the extra states did not merge with the truth but merged together to create a new spurious mode not close to true parameter values.

The use of an asymmetrical prior where $\bar{\alpha}>\underline{\alpha}$ caused extra states to be assigned small stationary distributions and resulted in few or no observations to be allocated to the extra states during the MCMC. This behaviour was most clearly observed under the theoretically given constraints, where $P(K_0=2)=1$ and 0.97 given $\underline{\alpha}=\frac{1}{n}$ and 0.01 respectively. The lowest value of $\underline{\alpha}$ also induced a similar degree of posterior emptying when combined with $\bar{\alpha}$ which are not as extreme as this value, however. This behaviour softened for $\underline{\alpha}=0.01$; $P(K_0=3)=0.14$ for $\bar{\alpha}=4$, and 0.11 for $\bar{\alpha}=1$, with some iterations containing 4 occupied states.

The plots in the lower row in Figure \ref{HMM_Figure_1} show that for all values of $\bar{\alpha}$, a similar posterior space was created when $\underline{\alpha}=\frac{1}{n}$. Two clearly defined modes were positioned at the true parameter values, atop a `pool' of samples representing MCMC draws from empty states, where the $\gamma_k$ were drawn directly from the prior.

For all the asymmetrical priors explored where $\bar{\alpha}>\underline{\alpha}$, the separation evident between the posterior modes resulted in a total lack of mixing in the MCMC. No label switching was observed once the burn-in period has passed as the extreme posterior surface prevents the sampler from exploring the full posterior space. This will be investigated as part of Section \ref{Section3}.

\begin{table}[htb]
\caption[Large sample demonstration: proportion of occupied states]{Proportion of iterations representing each number of occupied states, obtained from fitting a HMM with $K=4$ states to $n=10,000$ observations from a simulated HMM with $K^*=2$ true states. The values of $\bar{\alpha}$ and $\underline{\alpha}$ refer to the hyper-parameters of the prior on each row of the transition matrix, of the form   $q_{i,.}\sim \mathcal{D}(\bar{\alpha}_{1},  \underline{\alpha}_{2}, \underline{ \alpha}_{3},  \underline{\alpha}_{4})$}
\centering
\begin{tabular}{r|r|rrr}
\hline
$\bar{\alpha}$  & $\underline{\alpha}$  & $P(K_A=2) $           & $P(K_A=3) $             & $P(K_A=4) $ \\ \hline
172  (Theory)         & $\bar{\alpha}$        & 0.0000                & 0.0000                  & 1.0000 \\
~               & 0.01                  & 0.9651                & 0.0348                  & 0.0000 \\
~               & 1/n        & 1.0000                & 0.0000                  & 0.0000 \\ \hline
4 (K)           & $\bar{\alpha}$        & 0.0000                & 0.0000                  & 1.0000\\
~               & 0.01                  & 0.8539                & 0.1446                  & 0.0015  \\
~               & 1/n        & 1.0000                & 0.0000                  & 0.0000  \\ \hline
1               & $\bar{\alpha}$    & 0.0000  & 0.0000    & 1.0000 \\
~               & 0.01                  & 0.8782                & 0.1130                  & 0.0088 \\ \hline
~               & 1/n       & 1.0000                & 0.0000                  & 0.0000 \\
\end{tabular}
\label{HMM_Table_1}
\end{table}

\begin{figure}[htb]
  \centering
  \begin{subfigure}{0.32\textwidth}
    \adjincludegraphics[trim={0 0 0 {.5\height}},clip,width=1\linewidth]{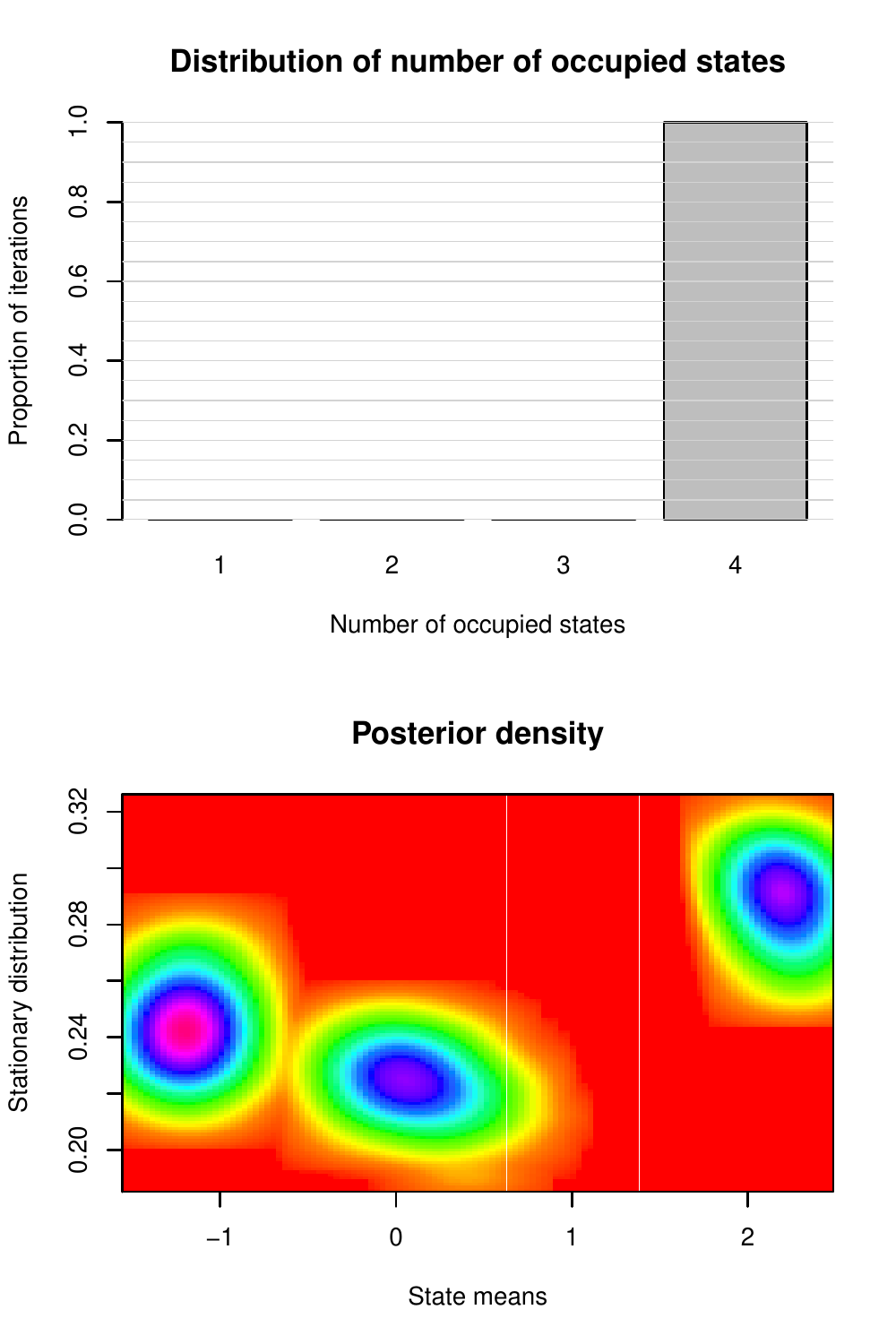}
\caption{$q_{i,.}\sim \mathcal{D}(172, 172, 172, 172)$}
\label{fig:sub1}
\end{subfigure}
  \begin{subfigure}{0.32\textwidth}
        \adjincludegraphics[trim={0 0 0 {.5\height}},clip,width=1\linewidth]{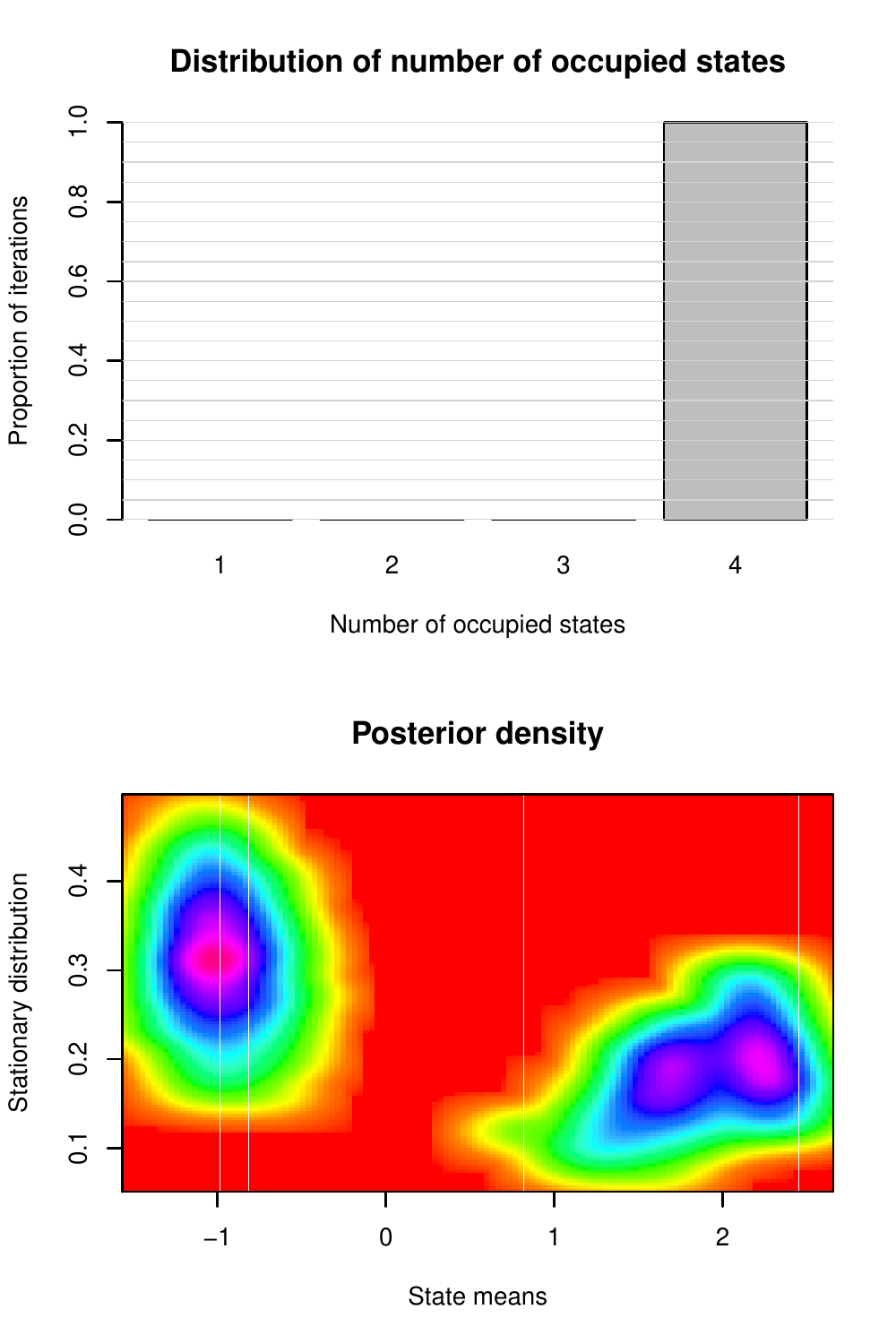}
\caption{$q_{i,.}\sim \mathcal{D}(4, 4, 4, 4)$}
\label{fig:sub2}
\end{subfigure}
  \begin{subfigure}{0.32\textwidth}
    \adjincludegraphics[trim={0 0 0 {.5\height}},clip,width=1\linewidth]{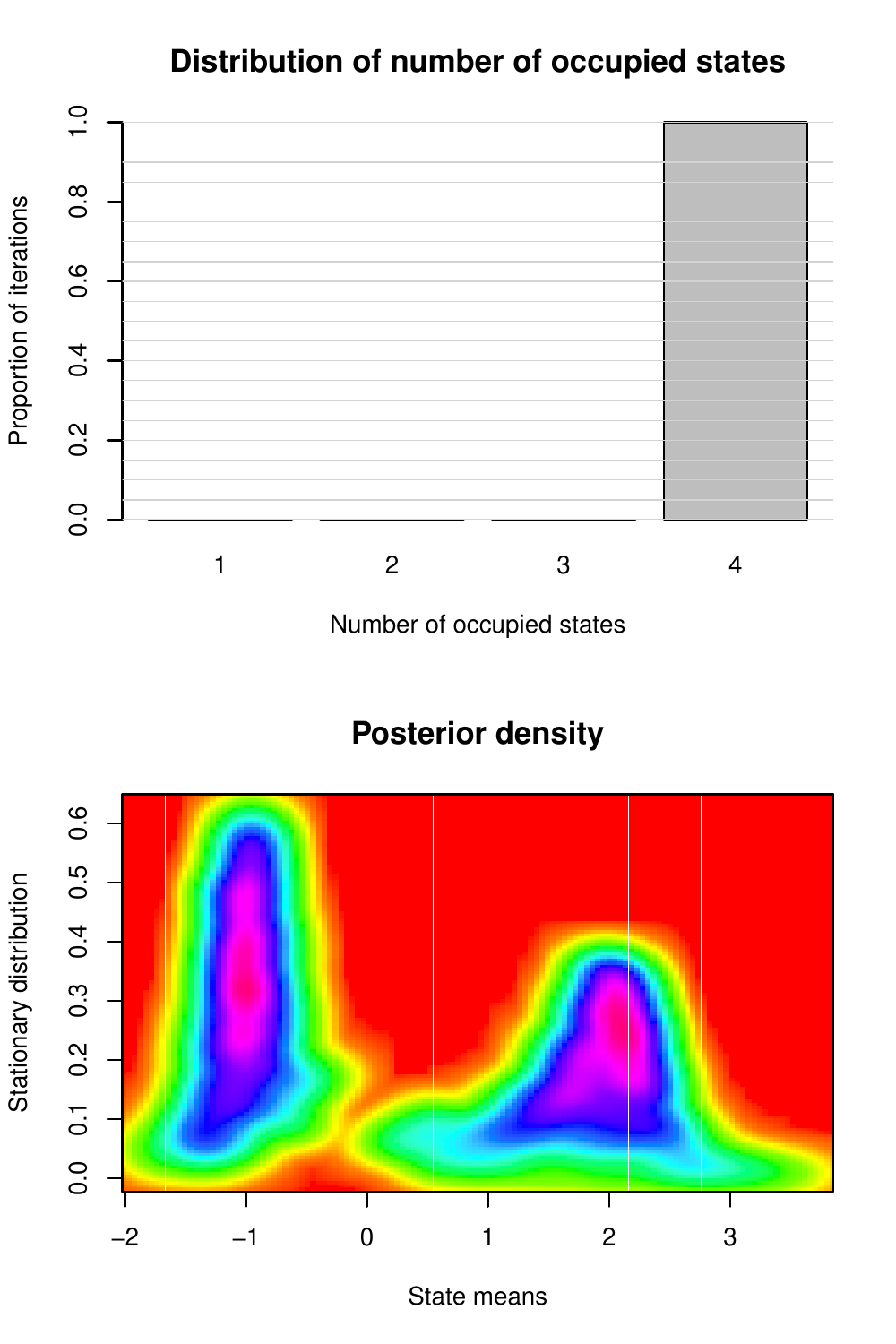}
\caption{ $q_{i,.}\sim \mathcal{D}(1, 1, 1, 1)$}
\label{fig:sub3}
\end{subfigure} \\
\begin{subfigure}{0.32\textwidth}
    \adjincludegraphics[trim={0 0 0 {.5\height}},clip,width=1\linewidth]{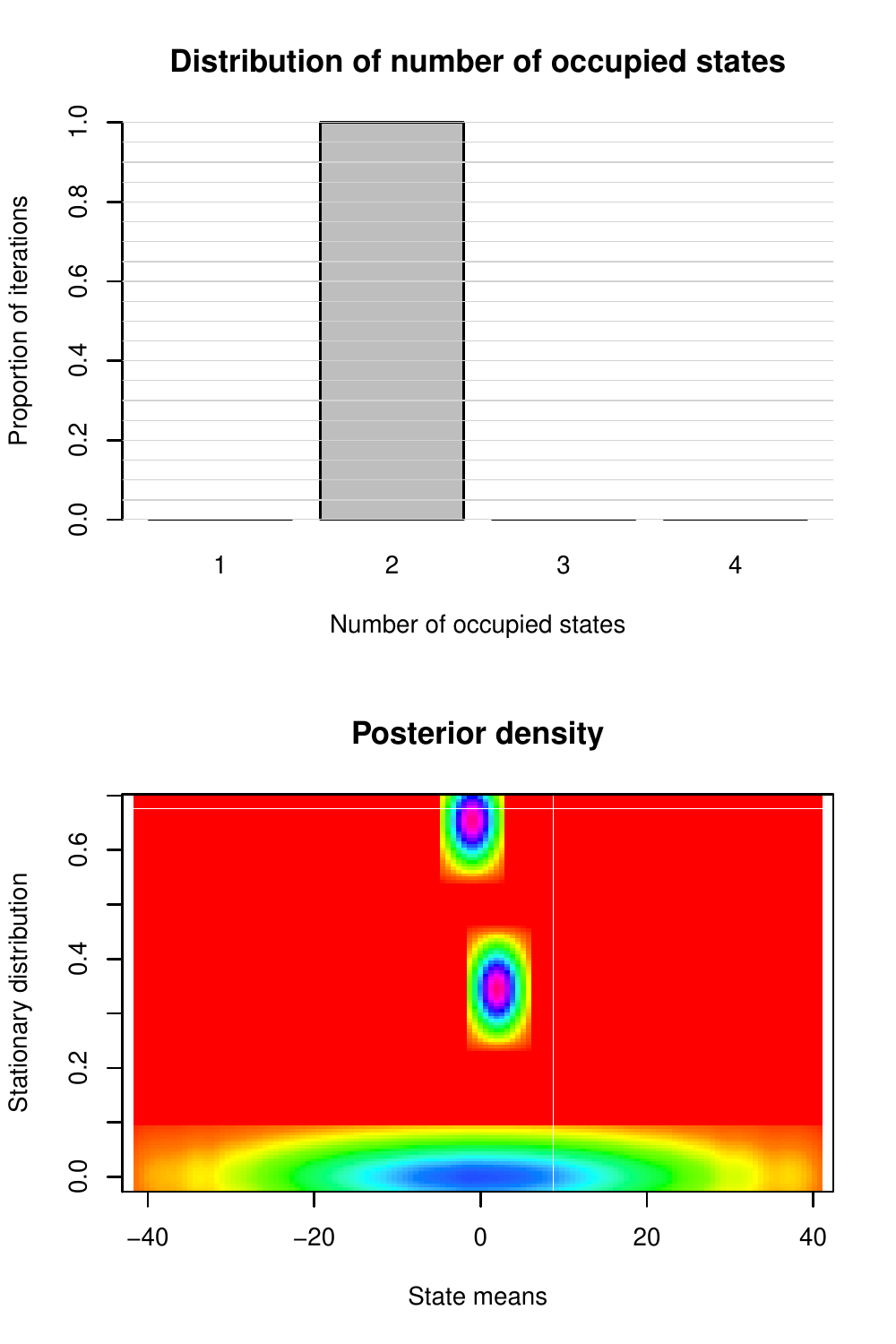}
\caption{  $q_{i,.}\sim \mathcal{D}(172,  \frac{1}{n},  \frac{1}{n},  \frac{1}{n})$}
\label{fig:sub4}
\end{subfigure}
  \begin{subfigure}{0.32\textwidth}
   \adjincludegraphics[trim={0 0 0 {.5\height}},clip,width=1\linewidth]{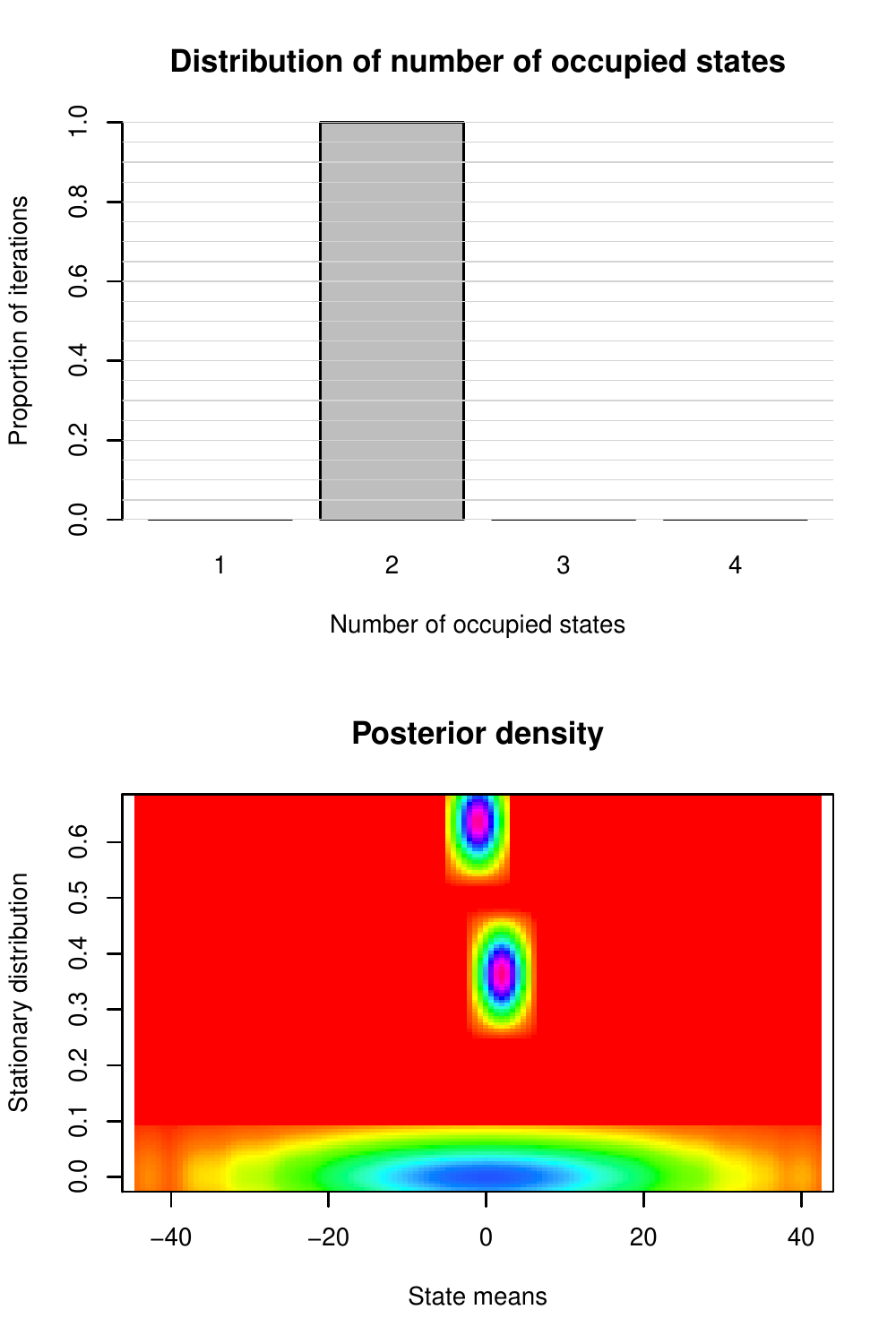}
\caption{  $q_{i,.}\sim \mathcal{D}(4, \frac{1}{n},  \frac{1}{n},  \frac{1}{n})$}
\label{fig:sub5}
\end{subfigure}
  \begin{subfigure}{0.32\textwidth}
   \adjincludegraphics[trim={0 0 0 {.5\height}},clip,width=1\linewidth]{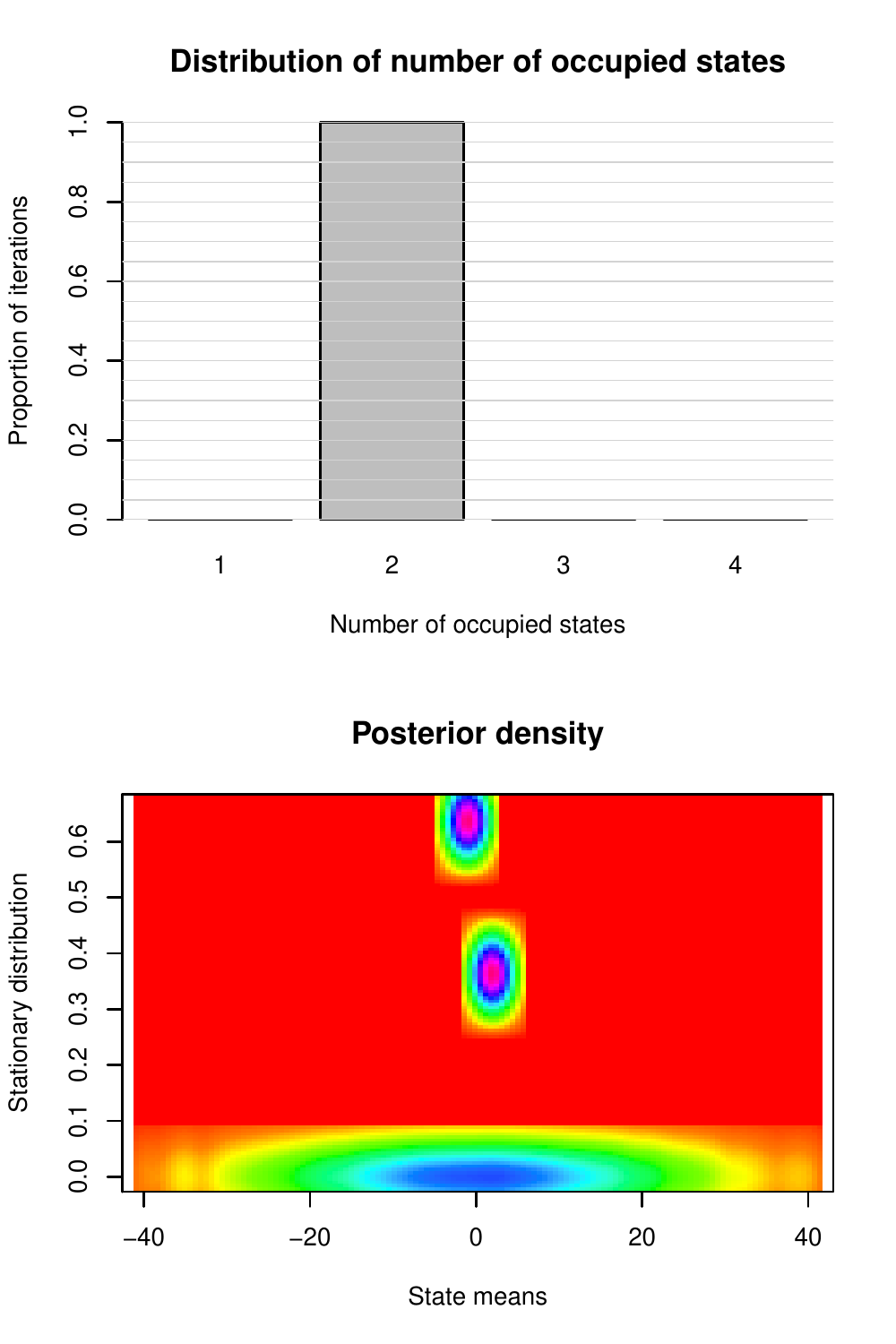}
\caption{ $q_{i,.}\sim \mathcal{D}(1,  \frac{1}{n},  \frac{1}{n},  \frac{1}{n})$}
\label{fig:sub6}
\end{subfigure}
\caption[Large sample demonstration: bivariate posterior densities]{Posterior densities produced from fitting a HMM with $K=4$ states to $n=10,000$ observations from a simulated HMM with $K^*=2$ true states. Each frame illustrates the results under different choices of the hyperparameters of the prior  $q_{i,.}\sim \mathcal{D}(\bar{\alpha}_{1},  \underline{\alpha}_{2}, \underline{ \alpha}_{3},  \underline{\alpha}_{4})$. Each column of plots refers to a different value of $\bar{\alpha}$: the theoretically validated threshold (plus 1) ($\bar{\alpha}=172$), the number of states in the model ($\bar{\alpha}=4$), and $\bar{\alpha}=1$. The rows correspond two different prior configurations: the upper row describes an overfitted posterior where $\underline{\alpha}=\bar{\alpha}$, while in the lower row of plots  $\underline{\alpha}=\frac{1}{n}=0.00001$.   Each plot illustrates the estimated bivariate density of posterior means (x-axis) and posterior stationary distribution (y-axis) over all MCMC samples for all states, representing the posterior surface sampled by the overfitted Gibbs sampler. The red region indicate areas of very low probability, increasing through orange to yellow, green, blue, then purple. }  
\label{HMM_Figure_1}
\end{figure}

\FloatBarrier
\section{Applicability of the asymptotic theory for smaller sample sizes}\label{Section3}
The problem of applying the theoretically motivated constraints on the  $\alpha_{i,j}$ prior hyperparameters  to  datasets of small sample size ($n$) is now considered.

 The influence of the hyperparameters in such cases can be unexpected and non-trivial, in part due to the intrinsic relationship that exists between the $\alpha_{i,j}$ and $n$.
 Consider that the full conditional distribution of the transition matrix, given a Dirichlet prior is placed on each row of $Q$, is $$\pi(q_{i,.}| x_{1:n} )\sim \mathcal{D}( \bar{\alpha}_{i,1}+n_{i,1}, \underline{\alpha}_{i,2}+n_{i,2}, \cdots, \underline{\alpha}_{i,K}+n_{i,K})$$ where $n_{i,j}$ is the number of transitions observed from state $i$ to state $j$. Assume as before that the states are arranged so the state with the most observations is first, and so on. 
The distribution $p(q_{i,.}| y_{1:n} )$  depends on the  $n_{i,j}$, which are intrinsically linked to the sample size. These values can be very small in practice even when an HMM is not overfitted, if transitions between two states are rare. For an overfitted HMMs with empty extra states, $n_{i,j}$ may equal 0 for $i>K_0$ and $j>K_0$.

    The hyperparameter $\alpha_{i,j}$ can be interpreted as the prior number of transitions from state $i$ to $j$, and the asymptotic constraints given in Section \ref{Section1} lead to strong statements about posterior transition probabilities when $n$ is small.
    A choice of $\underline{\alpha}=0.001$ and $K=2$ according to Theorem \ref{th:postempty}  leads to $\bar{\alpha}_{\textit{Theory}}>3.02$. $K=3$ leads to $\bar{\alpha}_{\textit{Theory}}>36.32$, and this increases rapidly as the total number of states fit to a model grows, reaching $\bar{\alpha}_{\textit{Theory}}>543.38$ when $K=5$.  For $K=10$,  as the threshold is not sharp, $\bar{\alpha}_{\textit{Theory}}>15,498.38$. These values influence the ability of an overfitted HMM to estimate the distribution of true underlying states correctly when $n$ is small, since an increasingly large sample size is required to overcome the given $\bar{\alpha}$ and estimate the true underlying $q_{i,i}$ for $i,j \leq K_0$.

    To illustrate, take as a simple example a HMM with $K^*=2$ states, and true transition probabilities $q^*_{1,1}=0.6$, $q^*_{1,2}=0.4$, $q^*_{2,1}=0.7$ and $q^*_{2,2}=0.3$.  For an arbitrary sample size $n$ and assuming the allocations are known, the transition frequencies are expected to be (approximately) $n_{i,j}=n\times q_{i,j}$ for $i,j = (1, 2)$ and 0 otherwise, following the convention of labelling by group size.
    To explore influence of the hyperparameter $\bar{\alpha}$ on the posterior transition probability $q_{1,1}$, 10,000 samples are drawn from $\pi(q_{1,}|X_t) \sim \mathcal{D}(\bar{\alpha}+n q^*_{i,1}, \underline{\alpha}_2+n q^*_{i,2}, \underline{\alpha}_3,\cdots,\underline{\alpha}_K)$, where $\underline{\alpha}_2=\cdots=\underline{\alpha}_K=0.001$). Figure \ref{HMM_Figure_2} includes box-plots of the posterior distribution of $q_{1,1}$ given $K=\{3, 5, 10\} $ and $n=\{10, 100, 1000, 10 000, 100 000, 1 000 000$ , and $10 000 000\}$. $\bar{\alpha}$ is set to three values to explore the ; $\bar{\alpha}=\textit{Theory}$, $\bar{\alpha}=K$, and $\bar{\alpha}=1$.

    Choosing $\bar{\alpha}$  according to the asymptotic bound had a very strong influence on the estimated distribution of $q_{1,1}$. When $K=3$, approximately 1,000 observations were needed for the true value of 0.5 to be within the $25^{th}$ and $75^{th}$ quantiles of the posterior distribution of $q_{i,j}$. For a model overfitted with $K=10$ states, even in this simple example, 1,000,000 observations were needed before the posterior approached the truth.   While extra states were clearly emptied by the constraints, $q_{1,1}$ was strongly skewed towards 1 and no longer corresponded to  $q_{1,1}^*$.

    Some modifications may be reasonable to obtain a less unbiased result under smaller sample sizes, as the threshold is not sharp, as mentioned. Smaller values of $\bar{\alpha}$ were observed in Section \ref{Section2} to also cause the stationary distribution of extra states to become small. Figure  \ref{HMM_Figure_2}  includes two smaller values for comparison;  $\bar{\alpha}=K=4$, and $\bar{\alpha}=1$. These resulted in a posterior distribution of $q_{1,1}$ closer to the desired value, but only results under $\bar{\alpha}=1$ included the true value for every $n$ and $K$ compared.

 \begin{figure}[htb!]
    \centering
    \includegraphics[width=\textwidth]{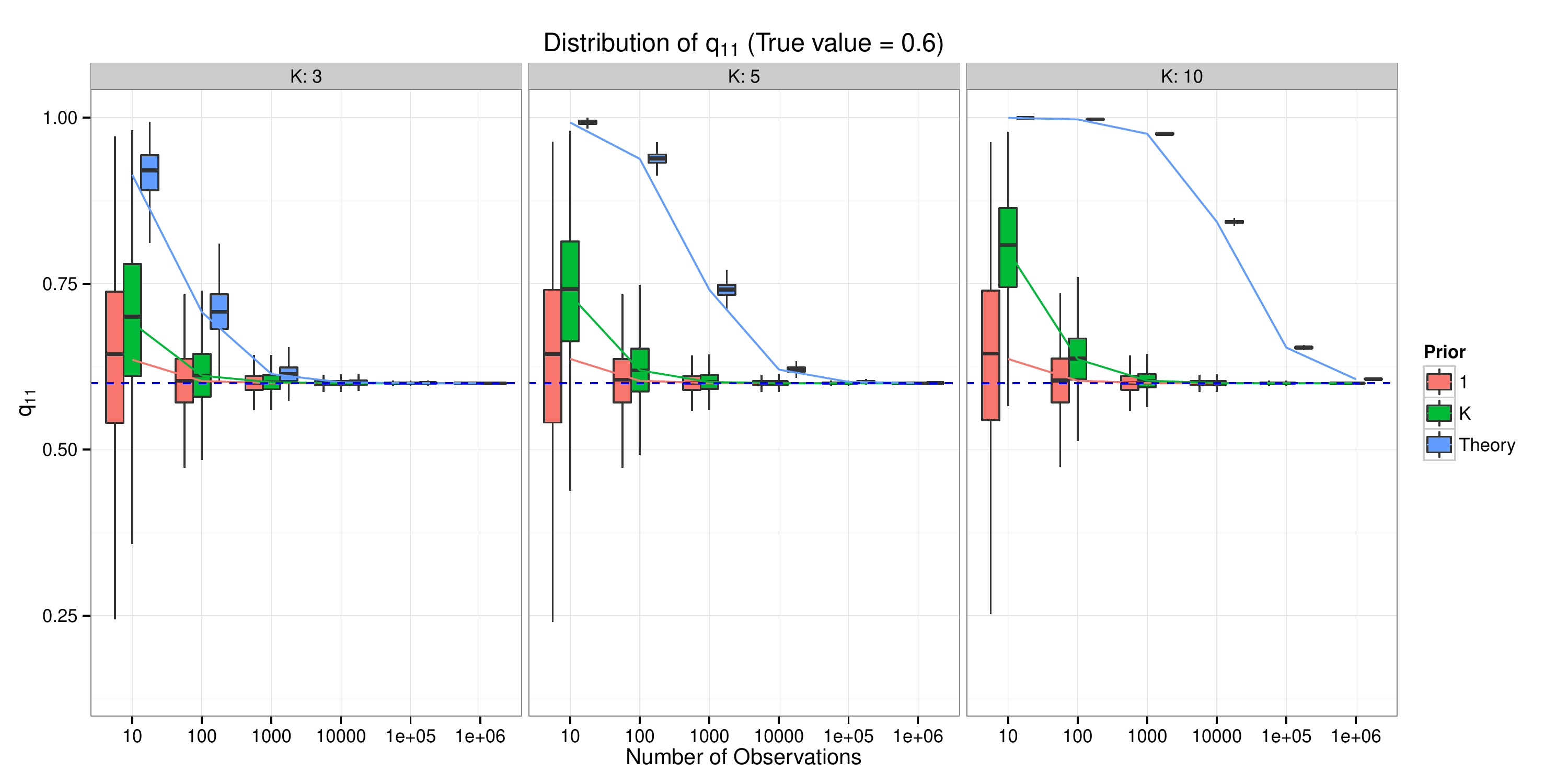}
\caption[Influence of prior on transition probabilities under different sample sizes and prior hyperpamaters.]{Posterior distribution of transition probabilities under different prior values  ($\bar{\alpha}=\text{Theory}$, $\bar{\alpha}=K$, and $\bar{\alpha}=1$) where a model with $K^*=2$ is overfitted with $K=3$, $K=5$, and $K=10$ components, under increasing sample sizes. The plots are based on 10,000 draws from $q_{i,.}\sim\mathcal{D}(\bar{\alpha}+n\times q^*_{i,1}, \underline{\alpha}_2+n\times q^*_{i,2} , \underline{\alpha}_3,\cdots,\underline{\alpha}_K )$. }
\label{HMM_Figure_2}
\end{figure}

\subsection{Considering alternative prior configurations}
The  structure of the prior is now considered in terms of the position of the large $\bar{\alpha}$ with respect to the smaller values $\underline{\alpha}$.
There are many ways a transition matrix can be written so as to accommodate the presence of extra states, because an HMM must specifically define how each extra state interacts with every other state. When seeking  to cause the extra $\mu_k$ to be small to empty out extra states, exactly how these empty states are incorporated into Q depends on the position of $\bar{\alpha}$, which determines the shape of the given prior on each row of $Q$.

\paragraph{Column Prior}
The asymptotic results developed in Section \ref{Section1} are based on a configuration where the prior is on each row of $Q$ is of the same form, which we refer to  as the \textit{Column} prior, or $\pi_{c}(Q)$, since the larger $\bar{\alpha}$ is always placed on the first position. The proof focuses on emptying the extra states with respect to the value of $\mu_Q(j)$, which is able to become small. This allows for the possibility of empty sets, which are defined through the relation $ \mu_Q = \mu_Q Q $.  $\mu_Q(j) = 0$ only if, for all $i$ such that $\mu_Q(i) \neq 0$,  the values of $q_{ij} = 0$. 

The column prior favours the configuration where $q_{ij}$ is small for $j \geq p+1$ and typically $q_{ij}$ is not small for $j\leq p$. Hence it favours the configuration which empties \emph{a priori} the last $K-p$ states. 
 This does not mean extra states are ensured to be empty in the posterior; this depends on the sample size and on the number of overfitted states included in the model, but such a framework maintains the ergodicity of the underlying Markov Chain in the presence of states with no observations.
 


\paragraph{Diagonal Prior}
Another type of asymmetrical prior formulation has been used in a non-parametric setting where HMMs model DNA segmentation \citep{Nur2009,Boys2000,Boys2004}. This is obtained by placing a large hyperparameter $\bar{\alpha}$ on the diagonal of the prior on Q, instead of the first position, the motivation being that while the number of states is unknown, the Markov chain is expected to be more likely to remain in the same state than to transition to another state \emph{a priori}. 

Compared to the column prior, a diagonal prior (referred to as  $\pi_d(Q)$) favours transition matrices $Q$ with small values off-diagonal, describing a model where states are unlikely to transition to the extra states. This could lead to non ergodic transition matrices, where emptying of extra states is possibly obtained. 
Obvious extensions apply when there are blocks of states.  While this raises concerns about the ability of the Markov chain created to mix effectively, if there is more than one stationary distribution one alternative is to choose the one with the largest mass; if there are multiple modes, then any one can be chosen at random.

\paragraph{Mixture Prior}
A third prior is considered to attempt to leverage the benefits of both approaches while overcoming their drawbacks by drawing on both structures. This is a mixture of both priors, referred to as a mixture prior $\pi_m(Q)$, where:
\[\pi_m(Q)\sim 0.5\pi_{c}(Q)+ 0.5\pi_{d}(Q)\]

This leads to three potential priors for overfitted HMMs, $\pi_c({Q})$, $\pi_d({Q})$, and $\pi_m({Q})$, which are compared in Section \ref{Section4}.

\subsection{Mixing difficulties and Prior Parallel Tempering}
The use of asymmetrical hyperparameters can also create additional computational difficulties, such as a severe lack of mixing over the MCMC. Multi-modal posteriors such as those created by HMMs with emptied out extra states have a high propensity for lack-of-mixing, as large areas of low probability can separate supported modes. Successful MCMC estimation of HMMs is dependent on obtaining well mixed samples which are able to reach all areas of the posterior space to identify those of higher probability, and convergence is not assured for a badly mixed MCMC sampler \citep{Celeux2012}.

 This lack of mixing is a familiar problem with MCMC estimation of the related finite mixture model when this is overfitted, and a method exists which attempts to take advantage of the unique topography of these models to overcome this: Prior Parallel Tempering, developed for overfitted finite mixture models by \citet{vanHavre2015} .
Parallel tempering involves the creation of a number of parallel Gibbs samplers which operate on ever more smoothed versions of the posterior distribution, sharing information occasionally, allowing the MCMC to overcome mixing problems with some added computational cost.

We adapt the Prior Parallel Tempering (PPT) method from \citet{vanHavre2015} to the case of overfitted HMMs. PPT obtains its smoothed posteriors by allowing the smaller hyperparameter $\underline{\alpha}$ to increase slowly, which causes the empty states to gradually merge with occupied states, softening the extreme posterior surface. For HMMs, the value of $\underline{\alpha}^{j}$ is raised incrementally until it matches $\bar{\alpha}$ in the parallel samplers.

 The tempering is incorporated by allowing $J$ blocked Gibbs samplers to run with slightly different hyperparameters in the prior on $Q$. The MCMC samplers are allowed to share information by swapping posterior samples when these are close. A Metropolis-Hastings acceptance step is needed to judge this, which is defined in Equation \ref{eq:PPThmms}. For HMMs, the acceptance ratio for this move will depend on a function of the probability of the initial distribution (assumed to be the stationary distribution) as well as $Q$. Given $p(\mu_Q)p(Q)=P(X^0)\prod_{i=1}^{K} p(q_{i,.})$, the acceptance ratio for a tempering move can be derived in the same way as in \citet{vanHavre2015}, retaining all terms involved with the prior on the transition matrix $Q$.

\begin{equation} \label{eq:PPThmms}
A_{PPT_{ij}}=\dfrac{\pi(Q^{i}|\alpha^{j}) \pi(Q^{j}|\alpha^{i})   }
  { \pi(Q^{i}|\alpha^{i})    \pi(Q^{j}|\alpha^{j})  } 
\end{equation}

The number of chains and the distance between the $\underline{\alpha}^{(j)}$ is chosen arbitrarily so that they are not too far apart; in practice the acceptance ratio is tracked to ensure plentiful mixing is occurring between all PPT chains.

\section{Verification: small sample simulation.}\label{Section4}
  This section considers a set of simulated HMMs with small sample sizes, which are overfitted under three proposed priors configurations and various hyperparameter values, under a Gibbs sampling scheme augmented by PPT. The interest is twofold; first, to provide some approximate consistency results for priors for which no theoretical results exist, and secondly to explore in more detail the influence of the choice of prior configuration and hyperparameter values.  Specifically, the matter of whether they induce appropriate posterior emptying of extra states, how well the resultant models represent the simulated data, and whether the true underlying parameters can be retrieved are queried.

  The estimation methodology is now described, followed by an overview of the simulations considered, and the strategy for analysis. This is comprised of both a replicate simulation study and a detailed but single-realization illustrative study.

\subsection{Methods}\label{sec:HMM_methods2}
  The prior on the rows of the transition matrix $Q$ can take three forms which have been described in detail. They are referred to as the column prior $\pi_c(q_{i,.})$, the diagonal prior $\pi_d(q_{i,.})$, and the mixture prior $\pi_m(q_{i,.})$, such that:
    \begin{itemize}
   \item  $\pi_c(q_{i,.})\sim \mathcal{D}( \bar{\alpha}, \underline{\alpha} )$
   \item  $\pi_d(q_{i,.}) \sim \mathcal{D}(\underline{\alpha}, \bar{\alpha},   \underline{\alpha} )$
   \item $\pi_m(q_{i,.}) \sim 0.5\pi_{c}(q_{i,.})+ 0.5\pi_{d}(q_{i,.})$
   \end{itemize}

  Prior Parallel Tempering is adapted to HMMs and incorporated into the Gibbs sampler described in Section \ref{Section2} in a similar manner as was done in \citet{vanHavre2015}.

\subsubsection{Gibbs sampler with Prior Parallel Tempering for HMMs}

  \begin{enumerate}
      \item Initialise $J$ tempered chains:
      \begin{enumerate}
      \item Choose prior values for the tempering: $\bar{\alpha}$, and $\underline{\alpha}^{j}$ for $j=1,\cdots, J$,
          \begin{enumerate}
          \item $\bar{\alpha}$ is the largest value of the hyperparameter.
          \item  $\underline{\alpha}^{j}$ for $j=1,\cdots, J$ describe the smaller hyperparameter in each chain:
              \begin{itemize}
              \item $\underline{\alpha}^{1}=\bar{\alpha}$.
              \item $ \underline{\alpha}^{j\prime} = \bar{\alpha}\left( \frac{\bar{\alpha}}{\underline{\alpha}_j}  \right)^{-\frac{(j-1)}{J}}   \quad \text{for } j=2,\cdots, J-1$  
              \item $\underline{\alpha}^{J}$ is the smallest, and the prior of the \textit{target} chain.
              \end{itemize}
          \end{enumerate}
      \item Choose a set of starting values for the allocations for each chain,   $x_{1:n}^{ j,0 }$
      \end{enumerate}
      \item Step $m$, for each iteration $m=1,\cdots, M$:
            \begin{enumerate}
              \item \textit{Gibbs Sampling Step}. For each chain $j=1, \cdots, J$,
                      \begin{enumerate}
                      \item Generate transition matrix $Q^\prime$ given previous states from $\pi(Q^{j,m}| x_{1:n}^{j,{(m-1)}})$,
                      \begin{itemize}
                      \item Accept $Q^\prime$ as new value for $ Q^{j,m}$ with probability $min \left(1,\frac{\mu_{Q^{(m-1)}} \left(X_1^{j,m}\right)}{\mu_{Q^\prime}\left(X_1^{j,m} \right) } \right) $

                      \end{itemize}
                      \item Generate $\gamma_k^{j^m}$ from $\pi(\gamma_k^{j,m}|y_{1:n}, x_{1:n}^{j, m-1})$ for each $k=1,\cdots, K$
                      \item Generate $x_{1:n}^{j,m}$ from $ p_n(x_{1:n}|Q^{j,m},\gamma^{j,m}, y_{1:n})$ using the forward backward algorithm.
                      \end{enumerate}

             \item \textit{Tempering Step}. Randomly choose $z_0=1$ or $2$. Then from $z = z_0$ to $z= J$, with probability             
  $$ A(z)=\dfrac{\pi(Q^{z,m}|\alpha^{z+1})  \pi(Q^{z+1,m}|\alpha^{z})   }
  { \pi(Q^{z,m}|\alpha^{z}) \pi(Q^{z+1,m}|\alpha^{z+1})  } $$
  accept tempering move and exchange:
             \begin{itemize}
             \item  $\gamma^{z,m}$ with  $\gamma^{z+1,m}$,
             \item  $x_{1:n}^{z,m}$ with $x_{1:n}^{z+1,m}$,
             \item  $Q^{z,m}$ with  $Q^{z+1,m}$.
             \end{itemize}

      \end{enumerate}
    \end{enumerate}

\subsubsection{Simulations}

A number of simulations are now described. These are designed to test the impact of the prior on overfitted HMMs of various configurations in practice, under small sample sizes. Plots of samples of size $n=100$ from each simulation considered are included in Figure \ref{HMM_Figure_3}. Simulation 1 (Sim 1) has $K^*=3$ states,  one of which is relatively well separated from two which overlap. Simulation 2 (Sim 2) described another HMM with $K^*=3$ states but under different state means and transition probabilities. Simulation 3 (Sim 3) contains $K^*=5$ states, with relatively well separated, equally spaced means, and mainly large values on the diagonal of $Q^*$. This is chosen to  describe a HMM with `sticky' behaviour, where the Markov chain is likely to remain in the same state. The true parameters of each simulation are detailed below.  All emission distribution  have known variances equal to 1.
  \begin{description}
  \item[Sim 1]  $\gamma_{S1}^*=(1,3,6)$,   $\mu_{S1}^*=( 0.33, 0.38, 0.29)$, and
                                       $Q^*_{S1}=\left[\begin{array}{ccc}
                                       0.2&0.3 &0.5  \\
                                       0.5 & 0.25 & 0.25\\
                                       0.25& 0.65& 0.1
                                        \end{array} \right] $
  \item[Sim 2]  $\gamma_{S2}^*=(-5, 5, 9)$, $\mu_{S2}^*=( 0.56, 0.18, 0.26)$, and $Q^*_{S2}=\left[\begin{array}{ccccc}
              0.8 & 0.1 & 0.1\\
              0.2 & 0.4 & 0.4\\
              0.3 &  0.2 &  0.5 \\
              \end{array} \right] $ ,
  \item[Sim 3]  $\gamma_{S3}^*=(-10, -5, 0, 5, 10)$,  $\mu_{S3}^*=(0.11, 0.24, 0.20, 0.22, 0.22)$, \\and  $Q^*_{S3}=\left[\begin{array}{ccccc}
                  0.2 & 0.3 &0.1 &0.2 &0.2 \\
                  0.1 & 0.6 &0.1 &0.1 &0.1 \\
                  0.1 & 0.1 &0.6 &0.1 &0.1 \\
                  0.1 & 0.1 &0.1 &0.6 &0.1 \\
                  0.1& 0.1& 0.1& 0.1  &  0.6\\
                  \end{array} \right] $.
  \end{description}

  \begin{figure}[htb!]
  \centering
  \includegraphics[width=1\textwidth]{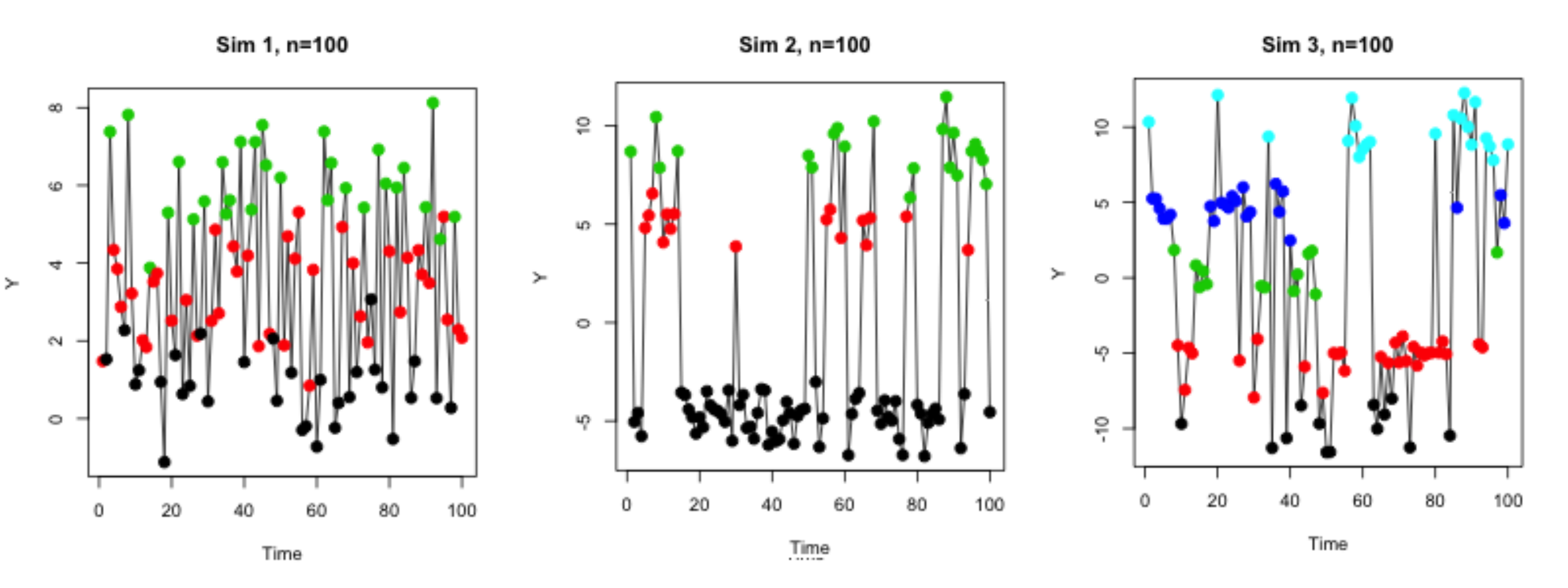}
  \caption[HMM simulations: Observations and true states for one realization of n=100]{Realizations of $n=100$ observations from the three simulations considered in this study. The points are coloured according to their true state of origin.}
  \label{HMM_Figure_3}
  \end{figure}

\subsubsection{Strategy}
  The analysis proceeds in two parts, as follows.
  \begin{description}
    \item[Replicate simulations]
      Sim 1 is used as the basis of a replicate simulation study (25 replicates) to explore the proposed priors in a small sample case, for $n=100$ and $n=500$. For each replicate, $K=10$ states are fit under the three prior configurations: the column prior $\pi_c(q_{i,.})$, the diagonal prior $\pi_d(q_{i,.})$, and the mixture prior $\pi_m(q_{i,.})$. For each prior type, all combinations of the following hyperparameters are used: $\bar{\alpha}=(n , K, \text{ and } 1)$, and $\underline{\alpha}=(\frac{1}{n}, \text{ and } \frac{1}{10n})$.   This is modelled with a Gibbs sampler with 30 PPT chains, for 20,000 iterations, discarding the first 10,000 as a burn-in. For each replicate, the empirical mode of the number of occupied states is computed. 

    \item[Illustrative simulations]  To illustrate the use of overfitting over a wider variety of simulations, a number of HMMs are modelled under each prior type ($\pi_c(q_{i,.})$,  $\pi_d(q_{i,.})$, and $\pi_m(q_{i,.})$). Hyper-parameters are set to $\bar{\alpha}= 1$ and $\underline{\alpha}=\frac{1}{n}$.
    To perform the analysis, samples of $n=100$ observations are drawn from Sim 2 and Sim 3. A sample of $n=500$ is also drawn from Sim 3.
    Each dataset thus created is modelled with $K=10$ states using a Gibbs sampler with 30 PPT chains, for 20,000 iterations ( discarding the first 10,000 as a burn-in).

    For each model, the distribution of the number of occupied states is explored, and the iterations associated with the most frequent value extracted for further processing. Label switching is resolved and parameter summaries, various quality of fit statistics and plots are created from the output.
      This includes the mean absolute error (MAE) and mean squared error (MSE), and statistics based on the posterior predictive distribution, estimated by resampling the posterior samples of the parameters in the MCMC in order to predict $10,000$ datasets of size $n$. Mean absolute errors (MAPE) and mean squared prediction errors (MSPE) are reported as an average over the replicates. Predictive concordance is also computed, which can be interpreted as the average proportion of $y_i$'s that are not outliers given the model (based on the suggestion that any $y_i$ that is in either $2.5\%$ tail area of $y_i^{rep}$ should be considered an outlier) \citep{Gelfand1996}. An ideal fit should have a predictive concordance of around $95\%$.
  \end{description}

%
%
%
%
%
%

\FloatBarrier
\subsection{Results}
\subsubsection{Replicate Simulation Study}

For each replicate, the empirical mode of the distribution of occupied states, $\hat{K_A}$, was computed, and the distribution of $\hat{K_A}$ is detailed in Table \ref{tab:HMM_Table_2}.
When $n=100$, the distribution of $\hat{K}_A$ is tightly concentrated around 2 occupied states for all priors, and only a few replicates under any prior configuration reported the correct number.
The best performing was the column prior $\pi_c$ with $\bar{\alpha}=1$ and $\underline{\alpha}=1/n$, where 33\% of replicates were able to identify the correct number of states.
The results were concentrated more strongly on small values of $K_A$ for the larger choices of $\bar{\alpha}$ and the smaller $\underline{\alpha}$.

For $n=500$, the three states were more clearly defined in the simulations, and while the column prior with $\bar{\alpha}=1$ reported 3 occupied states with a probability of 1,  the results revealed some perturbing inconsistencies under the other priors.

First, the larger value of $\bar{\alpha}=n$ caused the number of occupied states to be consistently underestimated. For  $\bar{\alpha}=n$, $\pi_c$ reported the true number of occupied states (3) in 11\% of replicates only, and 2 otherwise; both other priors resulted in 2 occupied states in  100\% of replicates.
Reducing the hyperparameter value to $\bar{\alpha}=K$ (where $K=10$) did not appear to be  sufficient to prevent this for most priors: only $\pi_c$ with $\underline{\alpha}=1/n$ identified  3 occupied  states in 89\% of replicates.
Second, the type of prior was associated with the consistent estimation of the correct number of occupied clusters. The posterior behaviour was stable when $\bar{\alpha}=1$ for the column prior $\pi_c$, where all replicates reliably reported 3 occupied states for both values of $\underline{\alpha}$ essayed. This was also true of the mixture prior.
The results from the diagonal prior were unstable, and 50\% of replicates resulted in a majority of only 2 occupied states for both  $\underline{\alpha}=1/n$ and $\underline{\alpha}=1/10n$ ( for the later, 16\% reported 4 occupied states). This may be due either to estimation difficulties of intrinsic mixing issues with the Markov Chain suggested by the prior.

  \begin{table}[htb!p]
  \centering
  \caption[Results of replicate simulation study.]{\emph{Replicate study of Sim 1: Proportion of replicates which contained $\hat{K_A}=j$ occupied states for $j=2,3,4$. Unobserved values equal to exactly zero are shown as a dash. All probabilities $>0.5$ which refer to the true number of states are indicated in bold.}}
\resizebox{\textwidth}{!}{%
  \begin{tabular}{lll|rr|rrr|rr}
    \multicolumn{  3}{r}{\textbf{Prior type:} }      &    \multicolumn{  2}{c}{$\pmb{\pi_c}$}                       &                \multicolumn{  3}{c}{ $\pmb{\pi_d}$}&\multicolumn{  2}{c}{ $\pmb{\pi_m}$}                 \\ \hline
  $n$   & $\bar{\alpha}$ & $\underline{\alpha}$ &  $P(\hat{K_A}=2)$  & $P(\hat{K_A}=3)$  & $P(\hat{K_A}=2)$ & $P(\hat{K_A}=3)$ & $P(\hat{K_A}=4)$ & $P(\hat{K_A}=2)$ & $P(\hat{K_A}=3)$ \\ \hline
  100 & n    & 1/$n$        &  0.920       & 0.078       & 1.000         & -        & -         & 1.000        & -         \\
  100 &      & 1/10$n$      &  0.930    & 0.065      & 1.000         & -       & -        & 1.000        & -         \\ \hline
  100 & 10   & 1/$n$        &  0.842       & 0.158        & 1.000         & -       & -        & 0.980        & 0.022       \\
  100 &      & 1/10$n$      &  0.970       & 0.031      & 1.000         & -         & -        & 1.000        & -        \\ \hline
  100 & 1    & 1/$n$        &  0.630       & 0.350       & 0.940  &  0.061       & -        & 0.790 & 0.210        \\
  100 &      & 1/10$n$       &  0.910      & 0.091       & 1.000         & -       & -        & 1.000        & -         \\ \hline

  500 & n    & 1/$n$         & 0.871       & 0.129        & 1.000         & -       & -        & 1.000        & -         \\
  500 &      & 1/10$n$      &  0.920      & 0.083        & 1.000         & -       & -          & 0.690 & 0.310         \\ \hline
  500 & 10   & 1/$n$        &  0.400       & \textbf{0.600}        & 0.930        & 0.070    & -        & 0.670        & 0.330        \\
  500 &      & 1/10$n$      &  0.714       & 0.286        & 1.000         & -      & -        & 0.667        & 0.333        \\ \hline
  500 & 1    & 1/$n$        &  -           &\textbf{1.000}        & 0.500         & 0.500   & -         & -            &\textbf{ 1.000}            \\
  500 &      & 1/10$n$      &  -           & \textbf{1.000}       & 0.440         & 0.440   & 0.110     & 0.110  & \textbf{0.890}    \\
  \hline
  \end{tabular}}
  \label{tab:HMM_Table_2}
  \end{table}

\FloatBarrier
\subsubsection{Illustrative Simulation}
The three small-sample simulations described in Section \ref{sec:HMM_methods2} were modelled with $K=10$ states according to the strategy described and $\bar{\alpha}=1$, $\underline{\alpha}=1/n$.

\begin{figure}[htb!p]
    \centering
    \begin{subfigure}{0.32\textwidth}
    \adjincludegraphics[trim={0 0 0 {.5\height}},clip,width=1\linewidth]{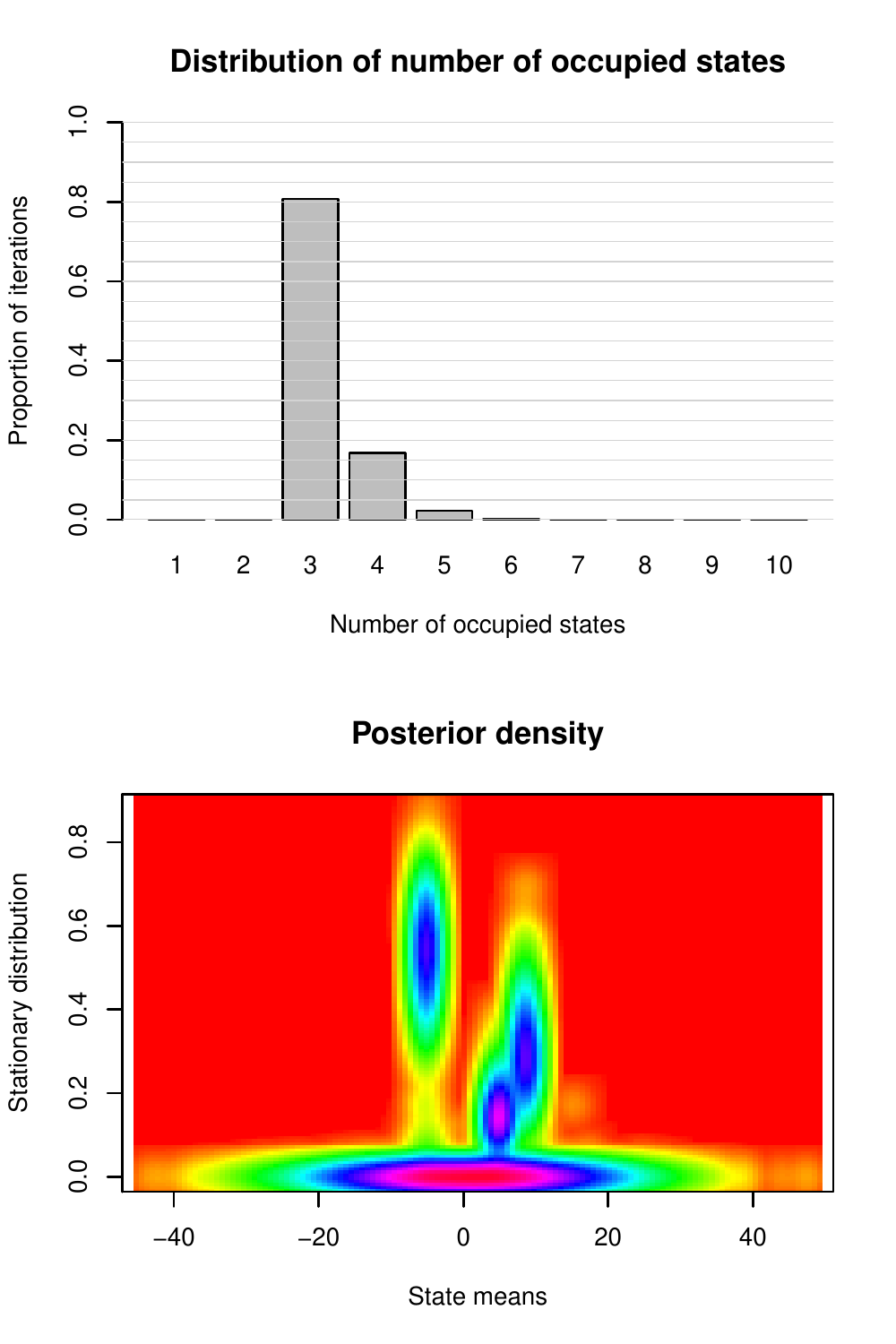}
    \caption{$\pmb{\pi_c}$,  Sim 2}
    \end{subfigure}
      \begin{subfigure}{0.32\textwidth}
    \adjincludegraphics[trim={0 0 0 {.5\height}},clip,width=1\linewidth]{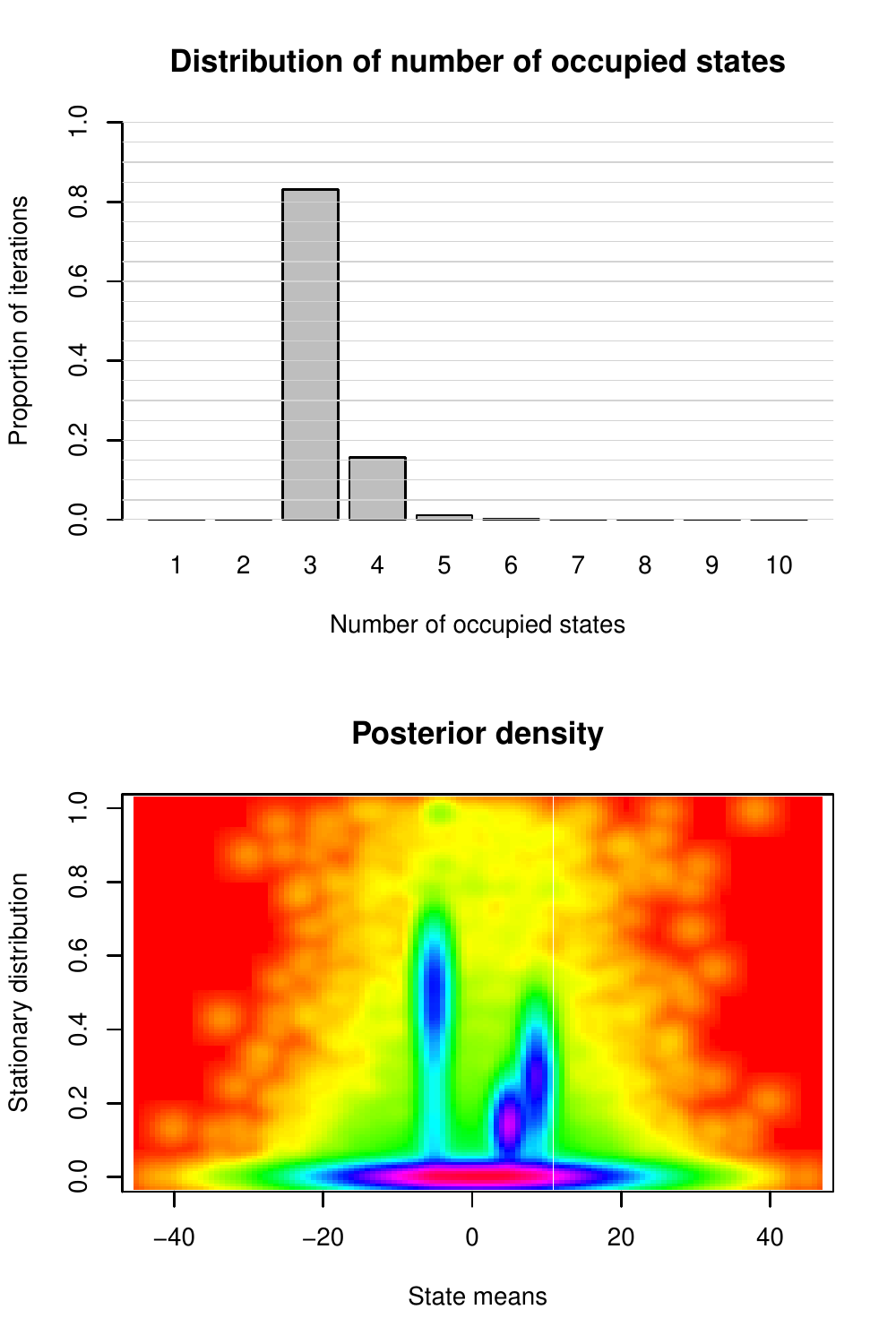}
    \caption{$\pmb{\pi_d}$,  Sim 2}
    \end{subfigure}
    \begin{subfigure}{0.32\textwidth}
    \adjincludegraphics[trim={0 0 0 {.5\height}},clip,width=1\linewidth]{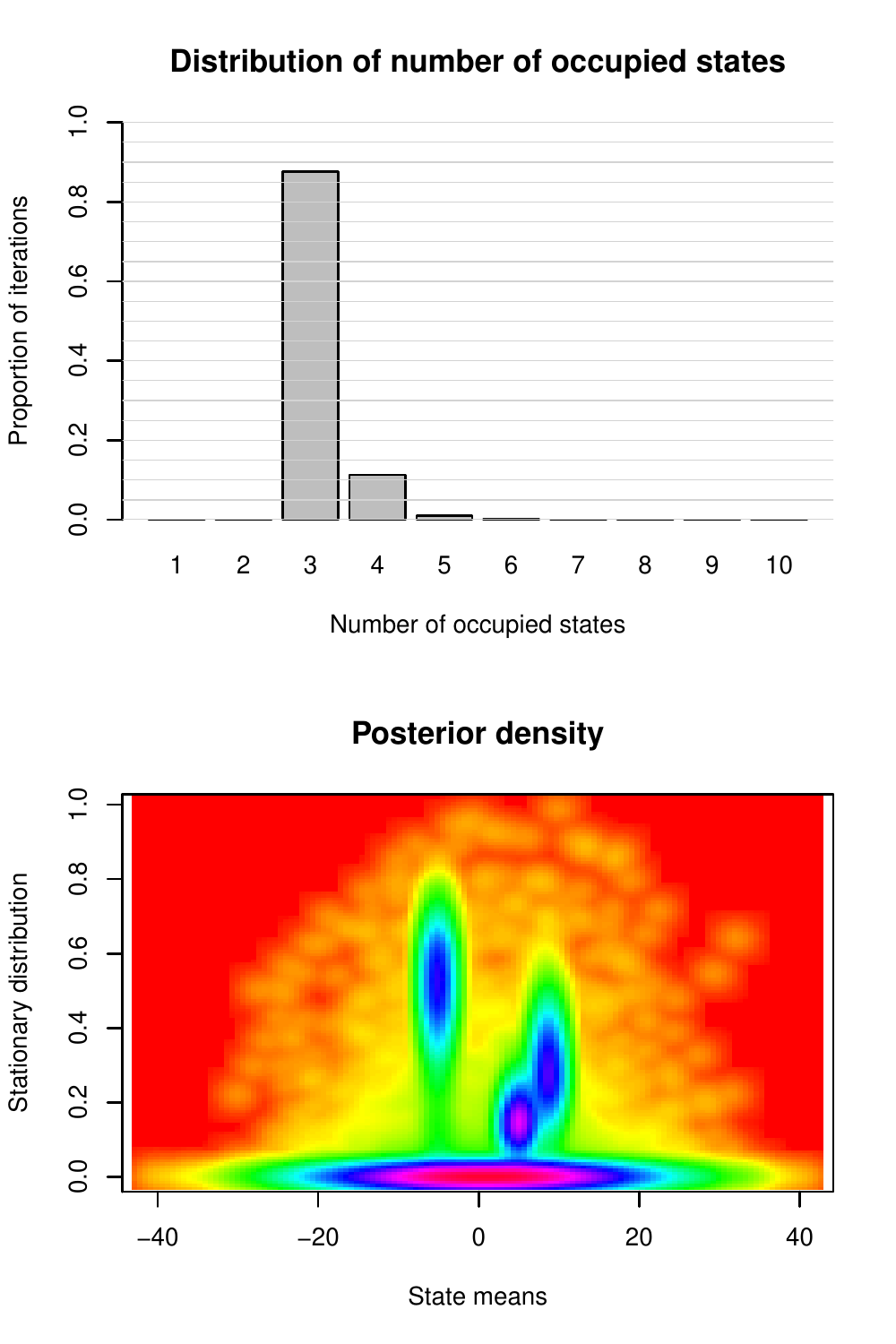}
    \caption{$\pmb{\pi_m}$, Sim 2}
    \end{subfigure}\\
    \begin{subfigure}{0.32\textwidth}
    \adjincludegraphics[trim={0 0 0 {.5\height}},clip,width=1\linewidth]{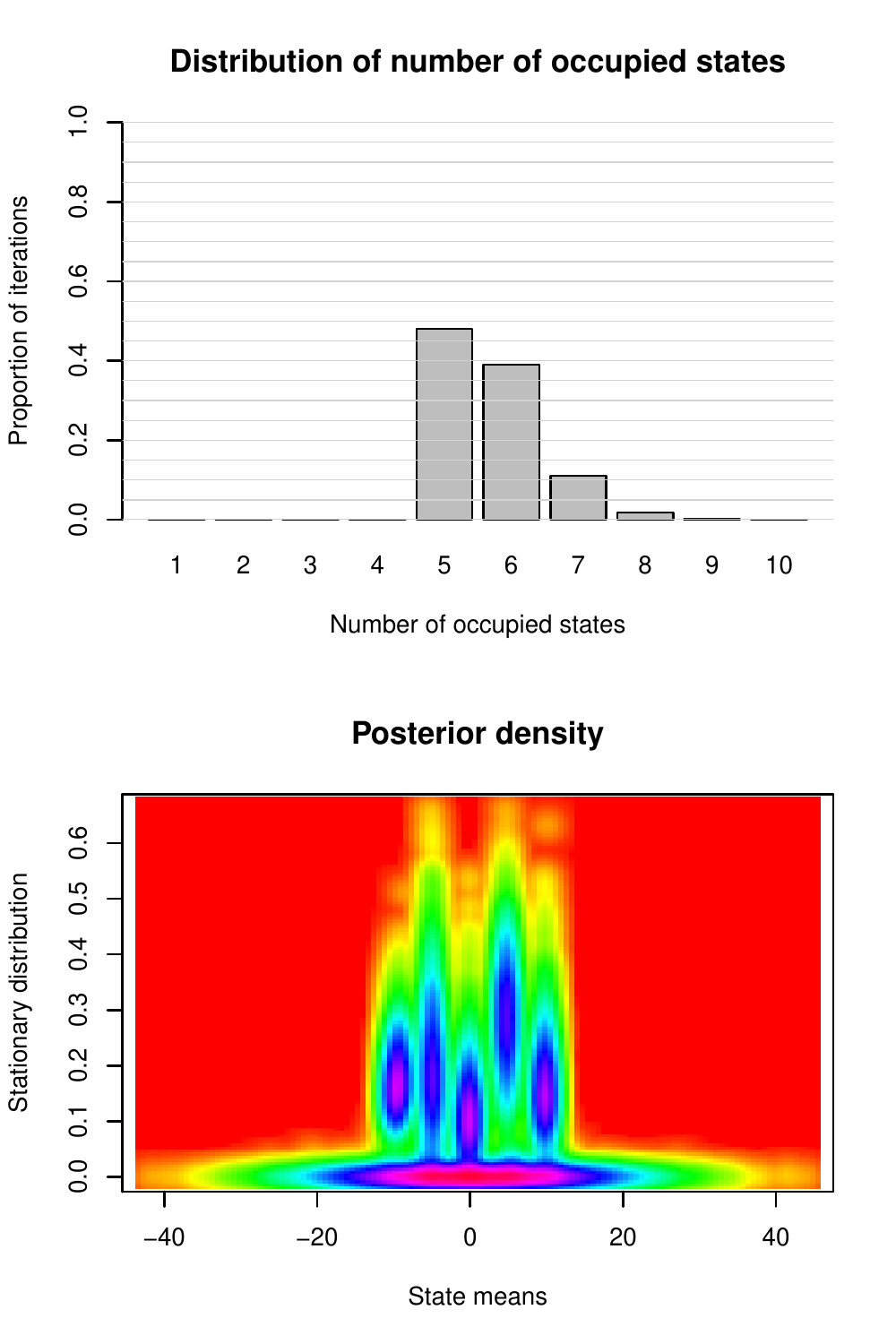}
    \caption{ $\pmb{\pi_c}$, Sim 3}
    \end{subfigure}
      \begin{subfigure}{0.32\textwidth}
    \adjincludegraphics[trim={0 0 0 {.5\height}},clip,width=1\linewidth]{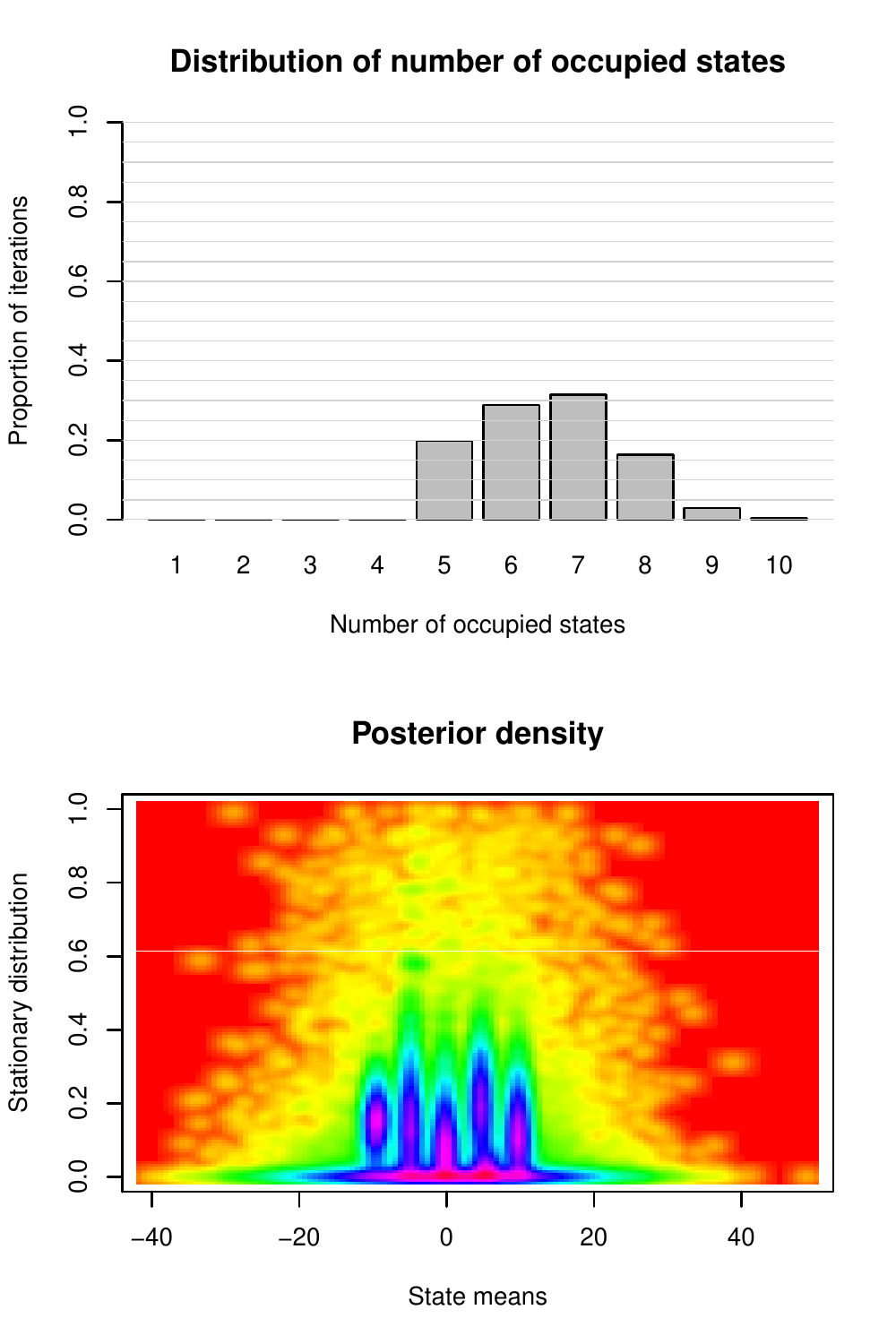}
    \caption{$\pmb{\pi_d}$, Sim 3}
    \end{subfigure}
    \begin{subfigure}{0.32\textwidth}
    \adjincludegraphics[trim={0 0 0 {.5\height}},clip,width=1\linewidth]{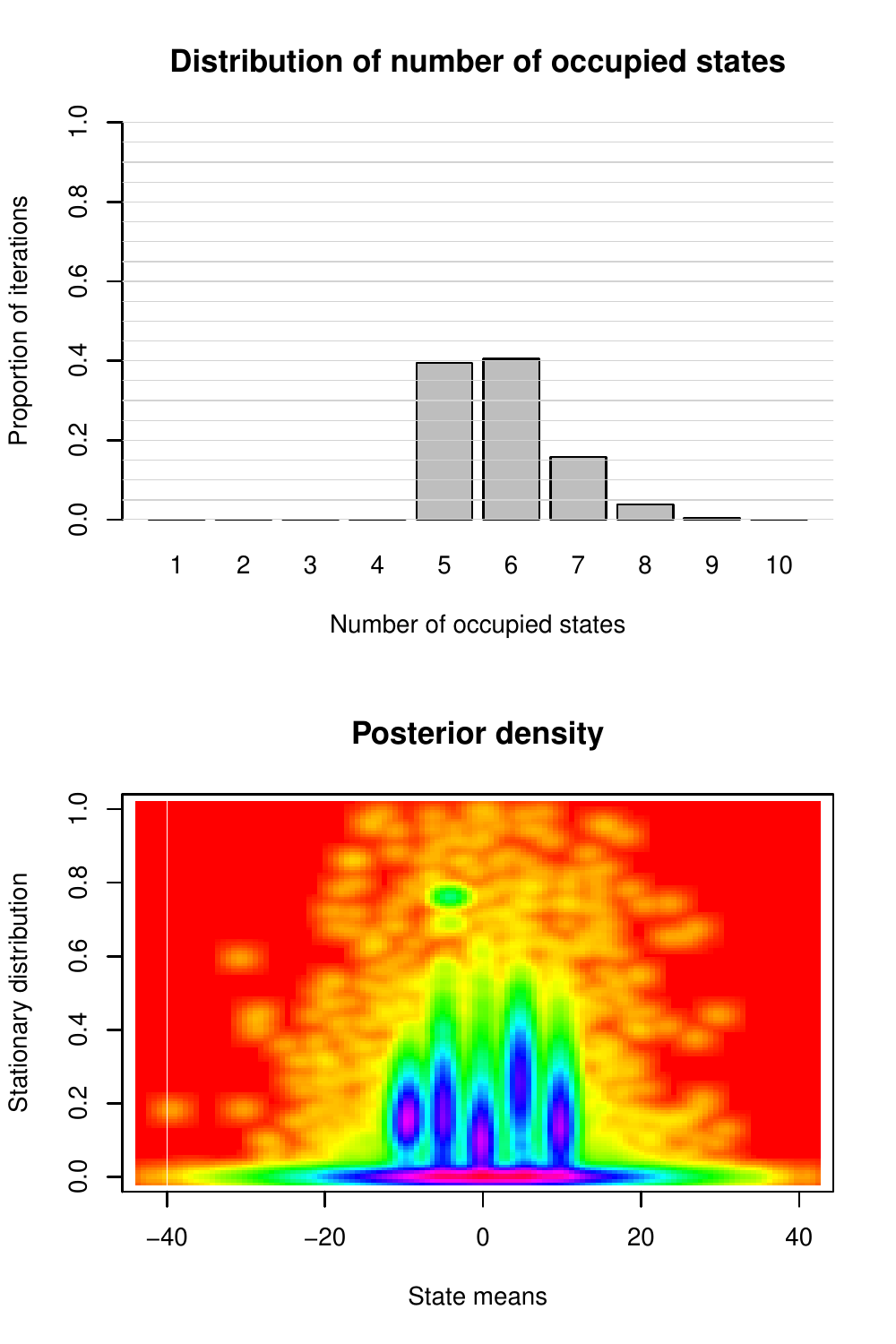}
    \caption{$\pmb{\pi_m}$, Sim 3}
    \end{subfigure}
  \caption[Small sample demonstration: bivariate posterior densities]{\emph{\textbf{Posterior densities produced from fitting a HMM with K=10 states to n=100 observations from Sim 2 and Sim 3.} Each row includes a plot for each choice of prior, where $\bar{\alpha}=1$ and $\underline{\alpha}=1/n$. Each plot illustrates the posterior surface sampled by the tempered Gibbs sampler.} }
  \label{fig:HMM_Figure_5}
    \end{figure}

\begin{table}[htb!]
\centering
\caption[Table of Goodness-of-fit statistics, illustrative simulations]{\emph{ \textbf{Quality-of-fit statistics for Sim 2 (n=100) and Sim 3 (n=100, and 500), fit with K=10 using a column prior $\pmb{\pi_c}$ with $\pmb{\bar{\alpha}=1}$ and $\pmb{\underline{\alpha}=1/n}$. }Includes, from left to right, the simulation (Sim), the number of occupied states ($K_A$), and the proportion of iterations reporting this value, $P(K_A)$. MAE and MSE are the Mean Squared and Absolute Errors respectively. $\%$ is the percentage of observations correctly reclassified into their state of origin. \textit{Conc.} is the concordance, and the average Mean Absolute Prediction Errors (MAPE) and average Mean Squared Prediction Errors (MSPE) are also included.}}
\label{tab:HMM_Table_3}
\begin{tabular}{rllrrrrrrrr}
  \hline
& Sim  & $K_A$ & $P(K_A)$ & \% & MAE    & MSE      & Conc. & MAPE & MSPE & \\
  \hline
 &  2 (n=100) & 3 & 0.81 & 0.96 & 77.78 & 91.78  & 0.94 & 81.45 & 104.1 &\\
 &  2 (n=100) & 4 & 0.17 & 0.68 & 185.62 & 595.18 & 0.94 & 94.72 & 186.12   &\\ \hline
 &  3 (n=100) &  5 & 0.48 & 0.97 & 86.89 & 128.45  & 0.96 & 82.79 & 107.51  &\\
 &  3 (n=100) &  6 & 0.39 & 0.55 & 510.45 & 5920.65  & 0.96 & 105.04 & 262.72  &\\
 &  3 (n=100) &  7 & 0.11 & 0.17 & 623.39 & 5006.68  & 0.96 & 134.06 & 407.58   &\\ \hline
 &  3 (n=500) & 5 & 0.93 & 0.98 & 404.41 & 523.01  & 0.95 & 401.18 & 505.65  &\\
 &  3 (n=500) & 6 & 0.07 & 0.12 & 2284.87 & 11827.61 & 0.95 & 1143.70 & 5170.52   &\\
   \hline
\end{tabular}
\end{table}

The inconsistencies which were noted in the replicate simulation study under the diagonal prior can be further clarified by observing the posterior parameter space under the three prior structures. Using the results of illustrative simulation study, Figure \ref{fig:HMM_Figure_5} includes 2-dimensional density plots of the stationary distributions $\mu_k$ and the means $\lambda_k$, for Sim 2 (n=100) and Sim 3 (n=500) under the column, mixture, and diagonal priors ($\pi_c$,$ \pi_d$, and $\pi_m$). While the high-probability modes were modelled similarly by the three priors, $ \pi_d$ and to some extend $\pi_m$ caused the posterior to also contain a large number of spurious samples not associated to other modes, visible as a cloud of yellow points. This illustrates one of the dangers of using a prior which is not theoretically justified; while in some instances, the result will be correct, there is no guarantee that some other result will not be reported instead. In this example, for Sim 2, the mode of $K_A$ and the posterior parameter estimates of $\hat{\mu}$ and $\hat{\lambda}$ were very similar between the three priors, but the same was not true for Sim 3 (with n=100). Here there were  a greater number of occupied components than expected, caused by merging between some states, and $\hat{K_A}=7$ for $\pi_d$ and  $\hat{K_A}=6$ for $\pi_m$.
Figure \ref{fig:HMM_Figure_6} includes the distribution of the number of occupied states for each  posterior space represented in Figure  \ref{fig:HMM_Figure_5}.

\begin{table}[htb!]
\centering
\caption[Table of Parameter estimates, Sim 2, n=100]{\emph{\textbf{Parameter estimates: Sim 2 (n=100), fit with K=10 using a column prior $\pmb{\pi_c}$ with $\pmb{\bar{\alpha}=1}$ and $\pmb{\underline{\alpha}=1/n}$}: Parameter estimates for $K_A=3$, ordered by increasing state means and including 95\% Credible Intervals. } }
\begin{tabular}{rlrrrrr}
\hline
& Parameter   & k=1                     & k=2                   & k=3                  &\\ \hline
& $\mu_k$     & 0.54 (0.35, 0.71)     &  0.16  (0.08,  0.26) &  0.31 (0.16,  0.47)   &\\
& $\gamma_k$ &  -5.09 (-5.36,  -4.82)   & 4.96 (4.36,  5.57)   &  8.64  (8.24, 9.05)& \\
\hline
\end{tabular}
\label{tab:Sim2pars}
\end{table}

For Sim 2 with $n=100$, an HMM was fitted with 10 states with a column prior.  Three states were occupied for the majority of iterations (81\%), with 17\% containing 4 states, and negligible weight on larger values. Comparing the two potential candidate models proposed individually ( with $K_A=3$ and $K_A=4$), the MSE and MAE strongly point towards the model containing the correct number of states (see  Table \ref{tab:HMM_Table_3} ) .
Focusing on the model with $K_A=3$, the estimated parameters were close to the true values, and the means and allocation probabilities estimated with little uncertainty (included in  Table \ref{tab:Sim2pars}). These are also illustrated in Figure \ref{fig:ResSim2}.  Only 3 observations were incorrectly clustered compared with their states of origin for this simulation.

Sim 3 contained five true states, which were difficult to distinguish when $n=100$. Here, the distribution of $K_A$ was flatter than that seen for Sim 2, and spread over a wider range of values; while 48\% of iterations reported 5 occupied states, 39\% of the MCMC sampling period was spent exploring a configuration with $K_A=6$ states, with a further 11\% on 7 states (see Table \ref{tab:HMM_Table_3}).
Comparing the two most commonly reported models (Table \ref{tab:HMM_Table_2})

\begin{table}[h]
\centering
\caption[Table of Parameter estimates, Sim 3]{\emph{\textbf{Parameter estimates: Sim 3 (n=100 and 500), fit with K=10 using a column prior $\pmb{\pi_c}$ with $\pmb{\bar{\alpha}=1}$ and $\pmb{\underline{\alpha}=1/n}$}: Parameter estimates, ordered by increasing state means and including 95\% Credible Intervals. }}
\resizebox{1.1\textwidth}{!}{%
\begin{tabular}{rlrrrrrrr}
\hline
$n$ & Parameter      & $k=1  $                 & $=2 $                  & $=3    $              & $=4$               & $=5 $       \\
\hline
100 & $\mu_k$      & 0.18 (0.09, 0.29)     & 0.22 (0.08, 0.42)    & 0.13 (0.04,  0.27)  &  0.30 (0.14,  0.47) &  0.17  (0.07, 0.32)\\
500 & $\mu_k$      & 0.11  (0.08, 0.15)  &   0.19   (0.13, 0.26)  &  0.19   (0.14, 0.26)&  0.33   (0.25, 0.41)  &  0.18   (0.13, 0.25) \\ \hline
100 & $\gamma_k$  &-9.53(-10.27,-8.84)  & -4.95(-5.42 , -4.48 )& -0.17 (-0.74,  0.39) & 4.73   (4.29 ,  5.16) & 9.74 (9.28,  10.19)  \\
500 & $\gamma_k$  & -10.19 (-10.46, -9.91) &  -5.03  (-5.25 , -4.81) &  0.11 (-0.10,  0.32) &   4.88   (4.72,5.04) & 10.10 (9.88,  10.30)&    \\
\hline
\end{tabular}}
\label{tab:Sim3Pars}
\end{table}

The results of fitting 10 states to samples of size 500 from Sim 3 identified 5 occupied states with high probability ($P(K_A=5)=0.93$), tightly concentrated around the true value with only 7\% of iterations reporting 6 occupied states (see Table \ref{tab:HMM_Table_3}). Table \ref{tab:Sim3Pars} contains the mean and 95\% CI for the parameters estimated for the 5 state models; these were close to the truth, and decrease in variance for $n=500$.

Graphical summaries of the best model for each simulation are included in Figure \ref{fig:BestModelsHMM}.  While relatively simple with well separated clusters, the simulations demonstrated  the effectiveness of the column prior for emptying extra states and allowing the retrieval of the original parameter estimates, which are not skewed by the prior hyperparameters.

\begin{figure}[htb!p]
    \centering
   \begin{subfigure}{0.9\textwidth}
   \makebox[\textwidth][c]{
    \adjincludegraphics[trim={0 {.5\height} 0 0},clip,height=3cm ]{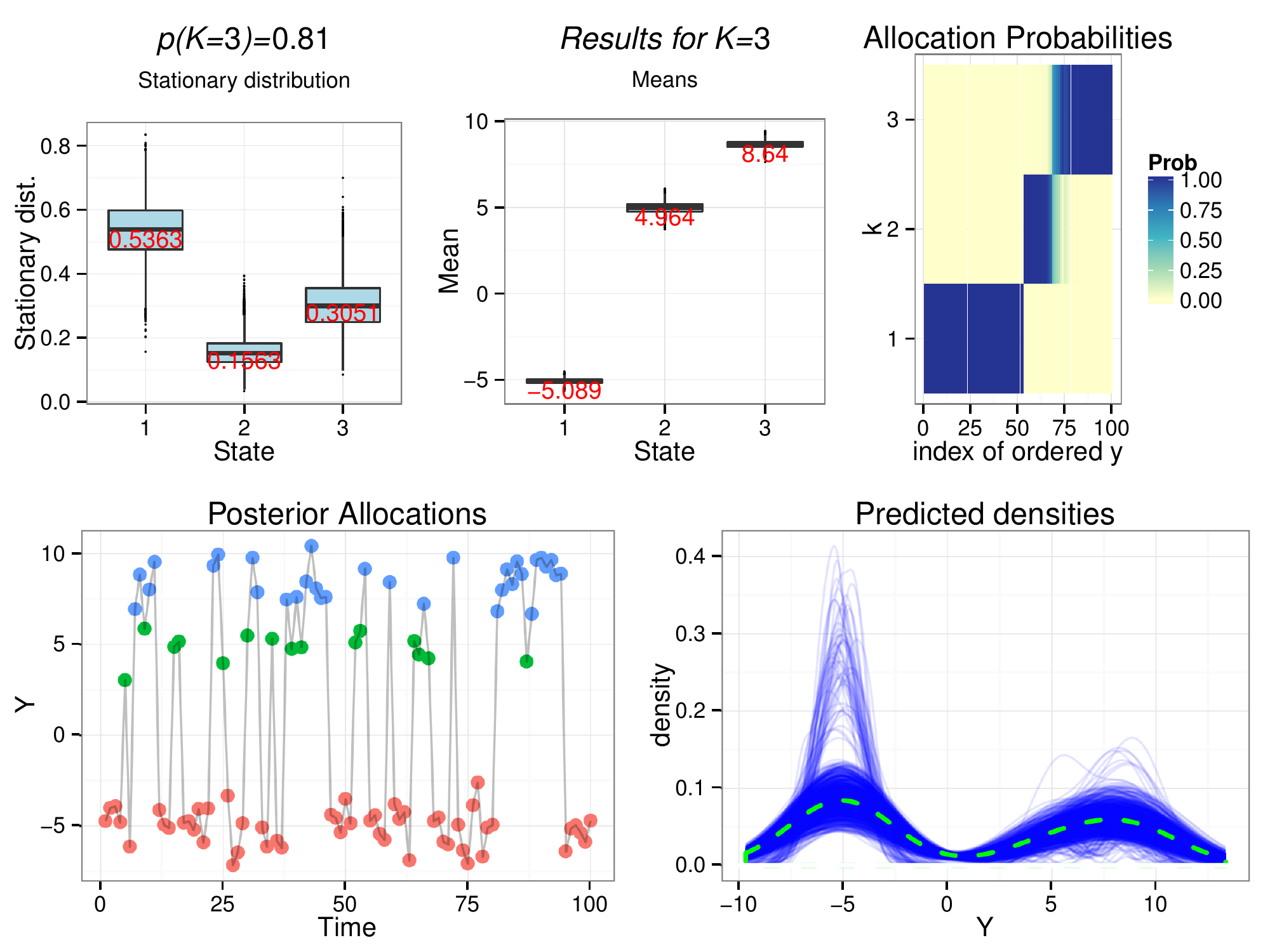}
    \adjincludegraphics[trim={0 0 0 {.5\height} },clip,height= 3cm]{Fig5_1}}
    \caption{Sim 2 (n=100).}
    \label{fig:ResSim2}
    \end{subfigure}
    \begin{subfigure}{0.9\textwidth}
    \makebox[\textwidth][c]{
    \adjincludegraphics[trim={0  {.5\height} 0 0},clip,height=3cm]{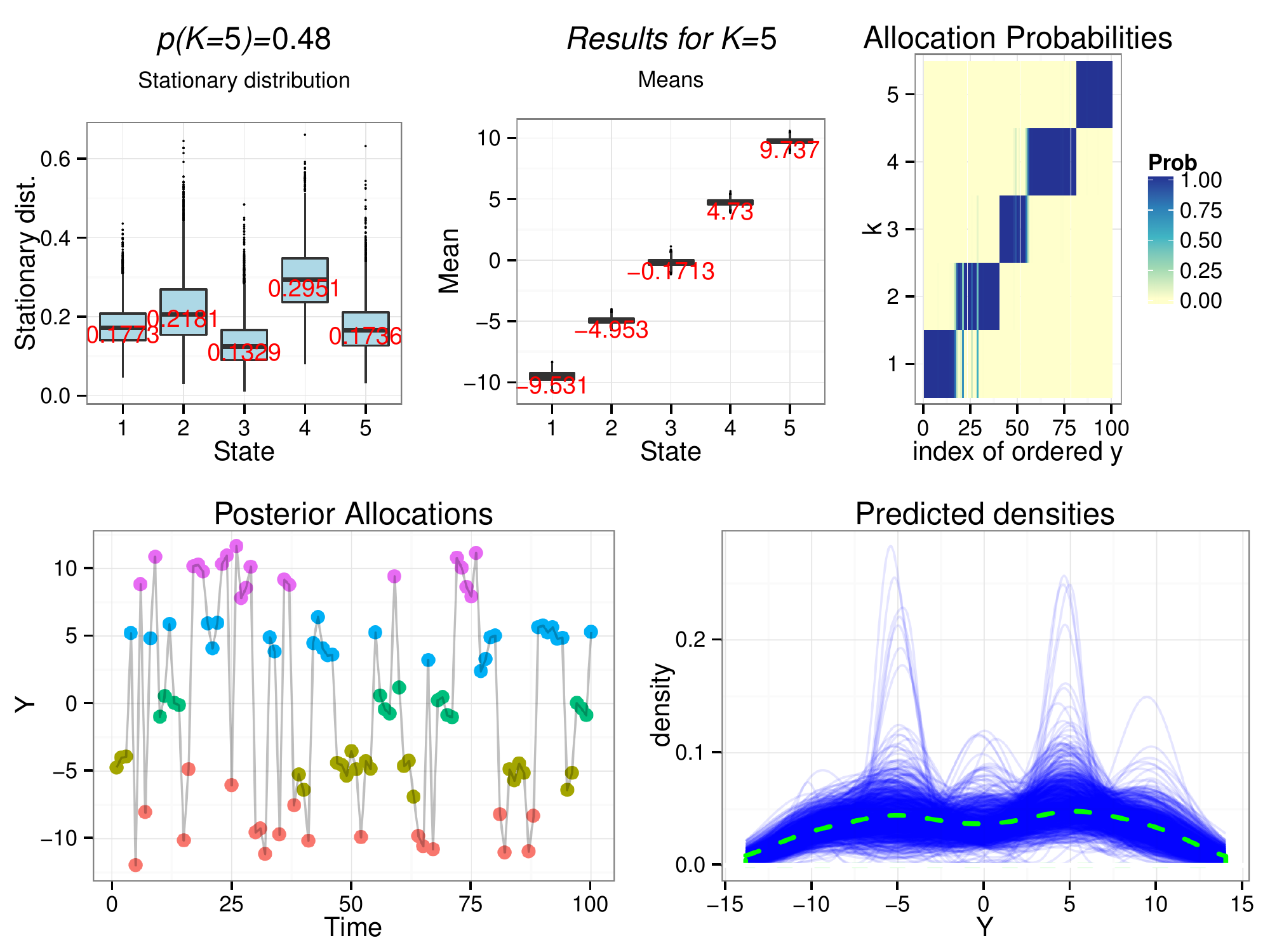}
    \adjincludegraphics[trim={0  0 0 {.5\height}},clip,height=3cm]{Fig5_2}}
    \caption{Sim 3 (n=100)}
    \label{fig:ResSim3n100}
    \end{subfigure}
    \begin{subfigure}{0.9\textwidth}
    \makebox[\textwidth][c]{
    \adjincludegraphics[trim={0 {.5\height} 0 0  },clip,height=3cm]{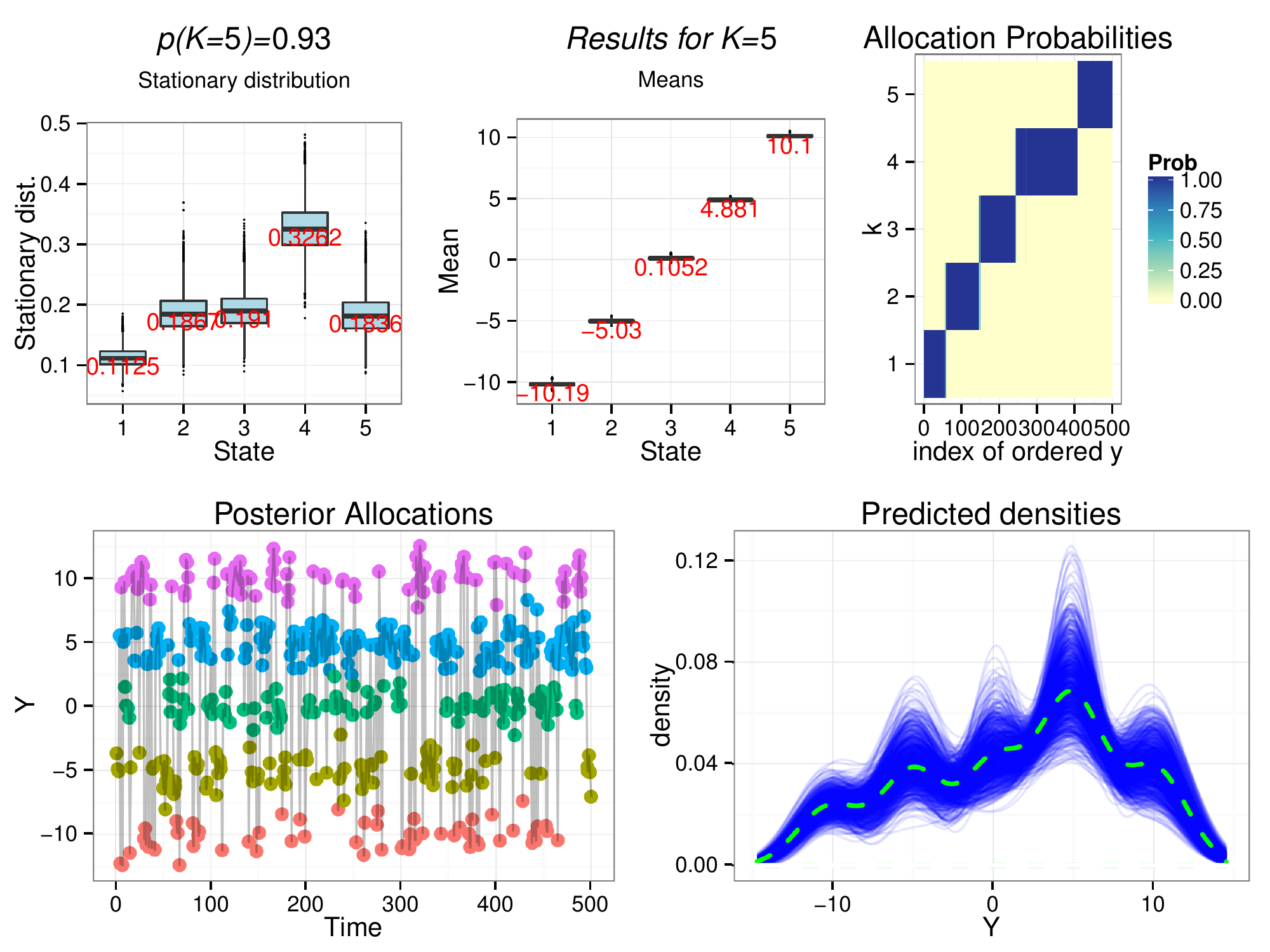}
    \adjincludegraphics[trim={0 0 0 {.5\height} },clip,height=3cm]{Fig5_3}}
    \caption{Sim 3 (n=500)}
    \label{fig:ResSim3n500}
    \end{subfigure}
     \caption[Posterior parameter estimates and allocation probabilities for illustrative simulations]{
     \emph{\textbf{Graphical summary of most frequented model for Sim 2 (n=100) and Sim 3, (n=100 and 500), obtained from fitting $K=10$ states with prior $\pi_c \sim \mathcal{D}(\bar{\alpha}_1, \underline{\alpha}_2, \dots, \underline{\alpha}_K )$, with $\bar{\alpha}=1$ and $\underline{\alpha}=1/n$.  } From left to right (in figures a-c): distribution of $\mu_k$ and $\gamma_k$ for non-empty states, posterior allocation probabilities for each state, the original data  coloured according to estimated state, and a density plot of the data (green dashed line) over  that of 10,000 datasets predicted from the posterior.
     The proportion of MCMC iterations reporting the  $K_A$ pertaining to the model in question is indicated in the upper left hand corner of the figure. }}
     \label{fig:BestModelsHMM}
 \end{figure}

\section{Discussion}
Overfitting HMMs in such a way as to empty out the stationary distribution of extra states was proven theoretically and shown to be possible in practice on large and small sample sizes. While the number of occupied states was not proven to be a consistent estimator of the true number, careful choice of hyperparameters were observed to encourage extra states to be rarely allocated observations in practice, by allowing the stationary distribution of extra states to be smaller than the weight of an observation. We suggest one should choose $\underline{\alpha}=\frac{\alpha_0}{n}$ for some $\alpha_0$; then, possibly $\hat{K}_A$ becomes consistent and simulations included point toward this, but this is only conjecture for the moment. In practice, we recommend checking the posterior samples and distribution of the number of components for evidence $\underline{\alpha}$ was low enough, flat or broad distributions over $K_A$ are strong indication this is not the case.

The value of $\bar{\alpha}$ dictated by the asymptotic constraints caused concerns in practice due to the relationship between this hyperparameter and the number of components. In order to be conservative on a vague boundary, Theorem \ref{th:postempty} resulted in very large
lower bounds for $\bar{\alpha}$, which are quite informative with respect to the posterior transition probabilities of occupied states when $n$ is not very large with respect to $K$. Retaining the general shape of the prior while softening the bounds was found to be equally effective at emptying out the $\mu_k$ relating to extra states when $n$ is not large, while allowing the posterior distributions of the occupied states to reflect the true state distributions.

The diagonal prior was  inconsistent and caused the MCMC and the Markov chain estimated to behave oddly, occasionally becoming trapped in configurations where one or two extra states merged with the true states. While it could result in the correct posterior, it appeared to be equally likely to produce another version with more (or fewer) occupied states. The mixture prior was able to overcome these problems to a certain extent, but did not perform better than the theoretically justified column prior. If a diagonal prior must be used for a specific application, such as certain genotyping studies \citep{Boys2000,Nur2009}, our results indicated that care should to taken, and models should be run several times to ensure the results are consistent and free of merged states. Compared to the diagonal prior, a mixture prior should lead to the same target posterior more consistently as the involvement of the column prior allows Markov chain to evolve out of bad, merged configurations.

While the theoretical proof is quite general, as is the proposed Gibbs sampler with Prior Parallel Tempering, the sections of this paper which referred to simulations were restricted to Gaussian HMMs where the state specific variances $\sigma^2_i$ were assumed to be known and equal to 1. The results are expected to apply equally well to Gaussian HMMs with unknown variances and known means. More work will be required to allows Gaussian HMMs with unknown means and unrestricted state specific variances to be reliably overfitted, however. In preliminary experiments, we found the added complexity results in a posterior which invariably chooses too few states, with large variances, unless they are extremely well separated. It will likely be possible to impose some stronger restrictions on the prior of the state variances  or state means to prevent such behaviour.

In conclusion, asymptotically and for very large sample sizes, clear constraints supported by theory were shown to exist on the prior on transition probabilities, which cause the posterior of the stationary distribution of extra states to be arbitrarily small while allowing the Markov chain thus created to remain ergodic. These constraints translate to strong statements on certain prior transition probabilities which become unreasonable when smaller, more realistic sample sizes are considered. The solution proposed here is to select values of $\bar{\alpha}$ and $\underline{\alpha}$ which soften the prior distribution while retaining its shape. The column prior, such that
$$\pi_c(q_{i,.})\sim \mathcal{D}( \bar{\alpha}, \underline{\alpha}, \underline{\alpha}, \cdots) $$
combined with Prior Parallel Tempering was observed to be able to consistently induce posterior emptying of extra states and allow the retrieval of posterior parameter estimates of the occupied states, which, once label switching was resolved and given a sufficient $n$, approximated the true parameter values closely.
\FloatBarrier

\FloatBarrier

\bibliographystyle{ba}

\newpage
\appendix

\section{Proof of Theorem \ref{th:postempty}}	\label{Appendix1}

		As explained in Section \ref{Section1}, the proof is based on the following two Lemmas
		\begin{lem}\label{lem:LANSn}
		Under the hypotheses of Theorem \ref{th:postempty} (apart from assumption \textbf{[A3]} which is not needed),
		for all $\epsilon >0$, there exists $C_\epsilon >0$ such that for all $\theta \in S_n$
		\begin{equation*}
		\mathbb P^* \left( \ell_n(\theta, x_1=1) - \ell_n(\theta^*, x_1=1) < -C_\epsilon \right) < \epsilon
		\end{equation*}

		\end{lem}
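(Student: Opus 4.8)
The plan is to reduce the statement to a local asymptotic normality (LAN) expansion of the HMM log-likelihood around a convenient reference point, and then to exploit the fact that on $S_n$ all the local coordinates are bounded. First I would introduce the embedded reference parameter $\theta^o \in \Theta_K$ whose true-state block coincides with $\theta^*$, whose extra-state emissions equal the offset values $\gamma_j^o \in \Gamma_0$, and whose transition probabilities into the extra states are exactly $0$. Since the extra states are then unreachable from $x_1 = 1$, every path $x_{2:n}$ visiting an extra state carries a vanishing factor, so $\ell_n(\theta^o, x_1 = 1) = \ell_n(\theta^*, x_1 = 1)$ for every realization; it therefore suffices to bound $\ell_n(\theta, x_1=1) - \ell_n(\theta^o, x_1=1)$ from below, uniformly over $\theta \in S_n$. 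I would write the log-likelihood through the prediction filter as $\ell_n(\theta, x_1=1) = \log g_{\gamma_1}(y_1) + \sum_{t=2}^n \log p_\theta(y_t \mid y_{1:t-1}, x_1=1)$, which is the natural object for differentiating an HMM likelihood in $\theta$.

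Next I would carry out a second-order Taylor expansion of this log-likelihood in $\theta$ around $\theta^o$, following the strategy of \citet{Gassiat2012} and the LAN theory for HMMs in \citet{DoMa01, Ryden2004, douc:moulines:12}. Writing $h$ for the rescaled local parameter (coordinates of order $\sqrt n\,(q_{ij}-q_{ij}^*)$ and $\sqrt n\,(\gamma_j-\gamma_j^*)$ in the regular directions, and the small entry probabilities $q_{ij}$, $j>K^*$, in the non-regular directions), the definition of $S_n$ forces every regular coordinate of $h$ into a fixed bounded set and, via the row-sum constraint, every entry probability into the extra states to be $O(1/\sqrt n)$. Assumption \textbf{A1} licenses differentiation under the sum and supplies the integrable envelopes for $\nabla_\gamma \log g_\gamma$ and $D^2_\gamma \log g_\gamma$, while the strict positivity $q_{ij}^* > 0$ yields $\rho_{Q^*} > 1$ and hence geometric forgetting of the filter; together these give a law of large numbers for the empirical information and tightness $\sqrt n\,\bar S_n = O_p(1)$ of the score, with a remainder that is $o_p(1)$ uniformly over $S_n$. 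The expansion then reads $\ell_n(\theta, 1) - \ell_n(\theta^o, 1) = \sqrt n\,\bar S_n^\top h - \tfrac12 h^\top V_n h + R_n$ with $V_n = O_p(1)$ and $\sup_{\theta \in S_n} |R_n| = o_p(1)$.

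The one-sided conclusion is then immediate and, crucially, does not require \textbf{A3}: because $\|h\|$ is bounded on $S_n$, the linear term is controlled by $\|h\|\,\|\sqrt n\,\bar S_n\| = O_p(1)$, and the quadratic term obeys $-\tfrac12 h^\top V_n h \ge -\tfrac12 \lambda_{\max}(V_n)\|h\|^2 = O_p(1)$ regardless of the sign of $V_n$. Hence $\ell_n(\theta,1)-\ell_n(\theta^o,1)$ is bounded below by an $O_p(1)$ quantity uniformly in $\theta \in S_n$, and a tightness bound on that quantity supplies, for each $\epsilon>0$, a constant $C_\epsilon$ with $\mathbb P^*\big(\ell_n(\theta,1)-\ell_n(\theta^*,1) < -C_\epsilon\big) < \epsilon$ for all $\theta \in S_n$, which is the claim.

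The main obstacle is the treatment of the non-regular extra-state directions: at $\theta^o$ the entry probabilities sit on the boundary and the extra emissions $\gamma_j^o$ are fixed in the offset region $\Gamma_0$ rather than being perturbed toward identifiability, so the standard regular LAN bookkeeping does not apply to them directly. Controlling the contribution of the hidden paths that make an excursion into an extra state is exactly where the offset condition \textbf{A2} enters: its uniform bound on $\sup_{\gamma\in\Gamma_0} g_\gamma / \max_{i} \inf_{|\gamma'-\gamma_i|<\delta} g_{\gamma'}$ ensures that such excursions, which carry weight only $O(1/\sqrt n)$ through the entry probabilities, can neither blow up the likelihood ratio nor destabilize $R_n$. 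The second delicate point is making the remainder bound uniform over $S_n$ despite the HMM dependence, which I would handle through the filter-forgetting estimates associated with $\rho_{Q^*}>1$.
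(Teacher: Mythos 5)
Your proposal is correct and follows essentially the same route as the paper's proof: a second-order Taylor expansion of $\ell_n(\cdot,x_1=1)$ around the embedded true parameter (with zero entry probabilities into the extra states and offset emissions $\gamma_j^o\in\Gamma_0$), tightness of the $\sqrt{n}$-normalized score in both the regular directions (via \citet{Ryden2004}) and the boundary directions $q_{i,j}$, $j>K^*$ (via the argument of Proposition 5 of \citet{DoMa01}), and a lower bound of order $-Cn$ on the Hessian obtained from the conditional expectation/variance decomposition together with the filter-forgetting estimates of \citet{douc:moulines:12} and assumption \textbf{A2}, so that both the linear and quadratic terms are $O_p(1)$ uniformly on $S_n$. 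Your identification of \textbf{A2} as the tool controlling excursions into the offset extra states and of filter forgetting as the tool for uniformity matches exactly where the paper deploys them.
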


		\begin{proof}[Proof of Lemma \ref{lem:LANSn}]
		      Throughout the proof $C$ is some generic constant whose value may vary.

		      The proof consists in proving considering a second order Taylor expansion of $\ell_n(\theta, x_1=1)$ around $\theta^*$ along the direction indicated by $S_n$, namely $q_{i,j}$ goes to $q_{i,j}^*$ for $i,j\leq K^*$ (we take $j=1$ as the reference column),  $q_{i,j} $ goes to $0$ for $j \geq k^*+1$ and $\gamma_i $ goes to $\gamma_i^*$ for all $i \leq K^*$. Hence Lemma \ref{lem:LANSn}, boils down to proving the gradient associated to these derivatives, computed at $\theta^*$, namely $\nabla \ell_n(\theta^*, x_1=1)$, is asymptotically Gaussian when normalized by $1/\sqrt{n}$ (it is enough to prove that it is $O_p(1)$) and that the second order derivatives computed at a random point in $(\theta, \theta^*)$, namely $D^2\ell_n(\bar \theta, x_1=1) $, is bounded from below by $-C $ times the identity matrix in $\R^{dK^*+ K^*(K^*-1)+ \underline \alpha K ( K-K^*)}$.

		      We write for all $\theta \in S_n$
		      \begin{equation*}
		      \ell_n(\theta, x_1=1) - \ell_n(\theta^*, x_1= 1)  = (\theta- \theta^*) \nabla \ell_n(\theta^*, x_1=1)+ \frac{ 1 }{ 2 } ( \theta - \theta^*)^tD^2 \ell_n(\bar \theta) (\theta - \theta^*) ,
		      \end{equation*}
		      where $\bar \theta \in \{ a \theta +(1-a)\theta^*, a\in (0,1)\}$ and
		      $$\theta^* = (\gamma_1^*, \cdots, \gamma_{K^*}^*, \gamma_{K^*+1}, \cdots, \gamma_K, q_{i,j}^*, i,j\leq K^*, q_{i,j}=0, \, j >K^*)$$
		      and $q_{i,j}>0$ and fixed for $i>K^*$ and $j\leq K_0$. In other words the Taylor expansion only concerns the coordinates $\gamma_i^*, i\leq K^*$, $q_{i,j}, $ $i,j\leq K^*$ and $q_{i,j}$, $j>K^*$.

		      We first look at the gradient. Note that, when computed at $\theta^*$, the first order derivatives associated to $q_{i,j}$, $i,j\leq K^*$ and $\gamma_i$, $i\leq K^*$ are the same as in the well specified case and we can apply directly Theorem 2 of  \citet{Ryden2004}, so that
		       $$n^{-1/2} \nabla_{\theta_l} \ell_n(\theta^*, X_1=x) \rightarrow \mathcal N(0, I(\theta^*)), $$
		       in distribution under $\mathbb P^*$.
		       for some finite matrix $I(\theta^*)$ for all $\theta_l \in \{q_{i,j},\,  i,j\leq K^*,\,  j\geq 2, \,\gamma_i \, i \leq K^* \}$.
		       We thus need only study
		       $\nabla_{q_{i,j}}\ell_n(\theta^*, x_1=1)$, for $i \leq K $ and $j\geq K^*+1$. The control of  $\nabla_{q_{i,j}}\ell_n(\theta^*, x_1=1)$ for $j \geq K^*+1$ follows the same lines as  the proof of Proposition 5 of \citet{DoMa01}.
		       First note that for all $s\geq K_0+1$
		       $$\nabla_{q_{r,s}} \ell_n(\theta^*, x_1=x)  = \sum_{i=1}^n \nabla_{\theta_l} \log f(y_i| y_{1:i-1}, x_1= 1; \,  \theta^*) $$
		       and that
		       $$ \nabla_{q_{r,s}}\log f(y_i| y_{1:i-1}, x_1= 1; \,  \theta^*) = \sum_{j=1}^K \frac{ \nabla_{q_{r,s}} p_{\theta^*}( x_{i}=j|y_{1:i-1},x_1=1)g_{\gamma_j}(y_i) }{  f(y_i| y_{1:i-1}, x_1= 1; \,  \theta^*)} $$
		       with
		       \begin{equation*}
		       \begin{split}
		       \nabla_{q_{r,s}} p_{\theta^*}( x_{i}=j|y_{1:i-1},x_1=1) &= \frac{ \sum_{l=1}^{K^*} q_{l,j}^*g_{\gamma_l^*}(y_{i-1}) \nabla_{q_{r,s}} p_{\theta^*}( x_{i-1}=l|y_{1:i-2})}{ f(y_{i-1}| y_{1:i-2}, x_1= 1; \,  \theta^*)} \\
		       & - p_{\theta^*}( x_{i}=j|y_{1:i-1},x_1=1)\frac{ \sum_{l=1}^{K^*} g_{\gamma_l^*}(y_{i-1}) \nabla_{q_{r,s}} p_{\theta^*}( x_{i-1}=l|y_{1:i-2})}{ f(y_{i-1}| y_{1:i-2}, x_1= 1; \,  \theta^*)}\end{split}
		       \end{equation*}
		       and for all $j >K^*$, $ \nabla_{q_{r,s}} p_{\theta^*}( x_{i}=j|y_{1:i-1},x_1=1)= 0$, together with $  p_{\theta^*}( x_{i}=j|y_{1:i-1},x_1=1)$ for all $i$.
		       In the sequel, to ease notations we omit $x_1=1$ in the notations, but every conditional distributions has initial value $x_1=1$.
		       Thus eq (35) in \citet{DoMa01} is verified with
		        $$ a_{\theta^*}( y, f)(u,v)  = \left( q^*_{u,v} - \frac{ \sum_i q^*_{i,v}g_{\gamma_i}(y)f(i)}{ \sum_i g_{\gamma_i}(y)f(i)} \right)\frac{ g_{\gamma_u}(y)}{\sum_i g_{\gamma_i}(y)f(i)}$$
		       \begin{equation*}
		       \begin{split}
		       U_{\theta^*,q_{r,s}}( y,f) (v)& = \frac{ g_{\gamma_r}(y)f( r ) \left( \1_s(v) - \1_1(v) \right)}{ \sum_i g_{\gamma_i}(y)f(i)}, \quad \mbox{if } \quad r \leq K^*\\
		       U_{\theta^*,q_{r,s}}( y,f) (v)& = 0, \quad \mbox{if } \quad r >K^*.
		       \end{split}
		       \end{equation*}
		       We restrict ourselves to $f \in \mathbf S^+_{\underline q} = \{ f  : \{1, \cdots, K^*\} \rightarrow [0,1], \, \sum_u f(u) =1, f\geq \underline q\}$. On this set
		        $$ \|a_{\theta^*}( y, f) - a_{\theta^*}( y, f')\|_1 \leq \frac{ \|f - f'\|_2 }{ \underline q^2}, \quad  \| U_{\theta^*,q_{r,s}}( y,f) -  U_{\theta^*,q_{r,s}}( y,f')\|_1 \leq \frac{ 3 }{ \underline q} \|f - f'\|_1  .$$
		      Using the same computations as in the proof of Proposition 5 of \citet{DoMa01}, we obtain that
		      $$ E_{\theta^*}\left[ \frac{1 }{n}\left( \nabla_{q_{r,s}} \ell_n(\theta^*, x_1=1) \right)^2 \right] = O(1) .$$

		      We now study the second order derivatives.
		       We use  the expression (22) in \citet{Ryden2004} :
		      \begin{equation}\label{22:DMR}
		      D^2 \ell_n( \theta; x_1= 1) = \mathbb E_\theta\left[ \sum_{i=1}^n \varphi( \theta, z_{i-1:i} )|y_{1:n}, x_1=1\right]+
		      var_\theta\left[ \sum_{i=1}^n \phi( \theta, z_{i-1:i} )|y_{1:n}, x_1=1\right]
		      \end{equation}
		      where $z_i = (x_i, y_i) $ and
		      $$ \varphi(\theta; z_{i-1:i}) =  D^2 \log ( q_{x_{i-1},x_i}g_{\gamma_{x_i}}(y_i) ), \quad
		       \phi(\theta; z_{i-1:i}) =  \nabla \log ( q_{x_{i-1},x_i}g_{\gamma_{x_i}}(y_i) ).$$
		      Note that for $l \leq K_0$,
		      \begin{equation*}
		       \nabla_{\gamma_l} \log ( q_{x_{i-1},x_i}g_{\gamma_{x_i}}(y_i) ) = \nabla \log g_{\gamma_l}(y_i) \1_{x_i=l}, \quad
		        \nabla_{q_{r,s}} \log ( q_{x_{i-1},x_i}g_{\gamma_{x_i}}(y_i) ) = \1_{x_{i-1}=r}\left(  \frac{ \1_{x_i=s}}{q_{r,s}} - \frac{\1_{x_i=1}}{q_{r,1}}\right)
		      \end{equation*}
		      So that
		      \begin{equation*}
		      \begin{split}
		      D^2_{\gamma_l, \gamma_j} \log ( q_{x_{i-1},x_i}g_{\gamma_{x_i}}(y_i) ) &=D^2 \log g_{\gamma_l}(y_i) \1_{x_i=l=j}, \quad D^2_{\gamma_l, q_{r,s}} \log ( q_{x_{i-1},x_i}g_{\gamma_{x_i}}(y_i) ) =0\\
		      D^2_{q_{r,s},q_{r's'}} \log ( q_{x_{i-1},x_i}g_{\gamma_{x_i}}(y_i) ) &=- \1_{x_{i-1}=r=r'}\left(  \frac{ \1_{x_i=s=s'}}{q_{r,s}^2} + \frac{\1_{x_i=1}}{q_{r,1}^2}\right)
		      \end{split}
		      \end{equation*}
		      Note hat we need only bound from below $(\theta - \theta^*)^tD^2 \ell_n(\bar\theta, x_1=1)(\theta-\theta^*)$ and that for all $s>K^*$ $q_{r,s}>q_{r,s}^*$. Hence  for derivatives associated to $\gamma_l$, it is enough to bound from above
		       $$\frac{ 1}{ n }  \mathbb E_\theta\left[ \sum_{i=1}^n \varphi( \theta, z_{i-1:i} )|y_{1:n}, x_1=1\right]  \leq \frac{ 1 }{ n } \sum_{i=1}^n |D^2 \log g_{\gamma_l}(y_i)| = O_p(1) $$
		       uniformly for $\theta \in S_n$.

		      For $\varphi $ associated to derivatives with respect to $q_{r,s}$ with $s\leq K^*$,
		       $\left|\mathbb E_\theta\left[ \sum_{i=1}^n \varphi( \theta, z_{i-1:i} )|y_{1:n}, x_1=1\right] \right|$
		        is bounded by
		         or $ \left( \frac{ 1 }{ q_{r,s}^2} + \frac{ 1 }{ q_{r,1}^2 } \right)$
		      and
		      \begin{equation*}
		      \frac{1}{n }\sup_{\theta \in S_n }\left|  \mathbb E_\theta\left[ \sum_{i=1}^n \varphi( \theta, z_{i-1:i} )|y_{1:n}, x_1=1\right]\right| = O_p(1)
		      \end{equation*}
		      To handle the variance term, note that in the case of $\gamma_l$ derivatives the variance terms are positive so we can bound from below by 0.
		      For derivatives associated to $q_{r,s}$ $q_{rs'}$, with either $s\neq s'$ or $s = s'\leq K^*$, for instance in the case where $s\neq s'$
		      \begin{equation}\label{Aqrss'}
		      \begin{split}
		      A_{q_{r,s}q_{rs'}} &= var_\theta\left[ \sum_{i=1}^n \phi( \theta, z_{i-1:i} )|y_{1:n}, x_1=1\right]\\
		       &\leq \frac{ n }{q_{r,1}^2} + \frac{ 2 }{q_{r,1}^2}  \sum_i \sum_{k=2}^{n-i}  p_\theta\left( x_{i-1}=r , x_i =1 |y_{1:n}, x_1=1\right)\times \\
		      & \quad \quad\left| p_\theta\left( x_{i+k-1}=r, x_{i+k}=1 |y_{1:n}, x_i = 1\right) -  p_\theta\left( x_{i+k-1}=r, x_{i+k} =1 |y_{1:n}, x_1=1\right)\right|.
		      \end{split}
		      \end{equation}
		      In the case of $s=s'\geq K^*+1$, then
		      \begin{equation}\label{Aqrs}
		      \begin{split}
		      A_{q_{r,s}q_{r,s}} &= \mathbb E_\theta  \left[ \sum_{i=1}^n \varphi( \theta, z_{i-1:i} )|y_{1:n}, x_1=1\right] + var_\theta\left[ \sum_{i=1}^n \phi( \theta, z_{i-1:i} )|y_{1:n}, x_1=1\right]\\
		      &= -\frac{ 1 }{ q_{r,s}^2 }  \mathbb E_\theta\left[ \sum_{i=1}^n \1_{x_{i-1} = r , x_i = s }|y_{1:n}, x_1=1\right] +\frac{ 1 }{ q_{r,s}^2 }  var_\theta\left[ \sum_{i=1}^n \1_{x_{i-1} = r , x_i = s }|y_{1:n}, x_1=1\right]\\
		      & -\frac{ 1 }{ q_{r,1}^2 }  \mathbb E_\theta\left[ \sum_{i=1}^n \1_{x_{i-1} = r , x_i = 1 }|y_{1:n}, x_1=1\right] +\frac{ 1 }{ q_{r,1}^2 }  var_\theta\left[ \sum_{i=1}^n \1_{x_{i-1} = r , x_i = 1 }|y_{1:n}, x_1=1\right] \\
		      & + \frac{ 2}{ q_{r,s}q_{r,1} } cov_\theta\left[  \sum_{i=1}^n \1_{x_{i-1} = r , x_i = s },  \sum_{i=1}^n \1_{x_{i-1} = r , x_i = 1 } | y_{1:n}, x_1=1 \right] \\
		      &\geq  - \frac{ 1 }{ q_{r,s}^2 }\sum_j p_\theta (x_{j-1} = r , x_j = s |y_{1:n}, x_1= 1)^2\\
		      &
		      + \frac{ 1 }{ q_{r,s}^2 } \sum_{i<j}\mathbb E_\theta\left[  \1_{x_{i-1} = r , x_i = s }\left( \1_{x_{j-1} = r , x_j = s } - p_\theta (x_{j-1} = r , x_j = s |y_{1:n}, x_1= 1) \right) |y_{1:n}, x_1=1\right] \\
		      & +\frac{ 2}{ q_{r,s}q_{r,1} } \sum_{i\neq j} \mathbb E_\theta\left[ ( \1_{x_{i-1} = r , x_i = s } - p_\theta (x_{i-1} = r , x_i = s |y_{1:n}, x_1= 1) ) \1_{x_{j-1} = r , x_j = 1 }|y_{1:n}, x_1=1\right].
		      \end{split}
		      \end{equation}
		      Note that
		      \begin{equation}\label{upboundprs}
		      \begin{split}
		      p_\theta (x_{j-1} = r , x_j = s |y_{1:n}, x_1= 1) &=\mathbb E_\theta \left( p_\theta( x_j=s |y_s, x_{j-1}=r,x_{j+1}) | y_{1:n}, x_1= 1\right)\\
		      &= \mathbb E_\theta \left( \frac{ q_{r,s}g_{\gamma_s}(y_j)q_{s,x_{j+1}} }{ \sum_lq_{rl}g_{\gamma_l}(y_j)q_{lx_{j+1}} } | y_{1:n}, x_1= 1 \right)\\
		      &\leq q_{r,s} \frac{ g_{\gamma_s}(y_j)}{ \underline q\max_{l\leq K^*} g_{\gamma_l}(y_j)} \leq \frac{ q_{r,s} }{ \underline q}  \| \frac{ g_{\gamma_s} }{ \max_{l\leq K^*} g_{\gamma_l} }\|_\infty
		      \end{split}
		      \end{equation}
		      and the first term of the right hand side of the above inequality is bounded from below, using assumption \textbf{A2}, by
		      $$ - \frac{ nC }{ \underline q^2 } , $$
		      for some $C>0$.
		      We thus need to control
		      \begin{equation*}
		      \Delta_{k,n}(x) =   p_\theta\left( x_{k}=r |y_{1:n}, x_1 = 1\right) -  p_\theta\left( x_{k}=r |y_{1:n}, x_1=x\right).
		      \end{equation*}
		      This has been done in many contexts, however to the best of our knowledge, this has not been done in cases where the matrix $Q$ has entries that satisfy
		       $ q_{i,j} \geq c$ for all $i $ and all $j\leq K^*$ and $q_{i,j}\in (1/2, 1)/\sqrt{n}$ for all $j > K^*$.
		      We can write
		       \begin{equation*}
		       \begin{split}
		      \Delta_{k,n}(x) &=   \mathbb E_{\theta}\left[\left. p_\theta\left( x_{k}=r |y_{1:n},x_{k+1}, x_1 = 1\right) -  p_\theta\left( x_{k}=r |y_{1:n}, x_{k+1},x_1=x\right)\right| y_{1:n}, x_1=x\right] \\
		      &:=  \mathbb E_{\theta}\left[\Delta_{k,n}(x, x_{k+1})|  y_{1:n}, x_1=x\right]
		      \end{split}
		      \end{equation*}
		      and note that
		      \begin{equation*}
		      \begin{split}
		      \Delta_{k,n}(x, x_{k+1})& = p_\theta\left( x_{k}=r |y_{1:k},x_{k+1}, x_1 = 1\right) -  p_\theta\left( x_{k}=r |y_{1:k}, x_{k+1},x_1=x\right) \\
		      &= \frac{ \mathbf L^\theta<y_1^k>(1,q_{.,x_{k+1}})}{ \mathbf L^\theta<y_1^k>(1,1)} - \frac{ \mathbf L^\theta<y_1^k>(x,q_{.,x_{k+1}})}{ \mathbf L^\theta<y_1^k>(x,1)}\\
		      &= \frac{ \mathbf L^\theta<y_1^k>\otimes \mathbf L^\theta<y_1^k> (1\otimes  x, q_{.,x_{k+1}}\otimes 1 - 1 \otimes q_{.,x_{k+1}}) }{ \mathbf L^\theta<y_1^k>\otimes \mathbf L^\theta<y_1^k> (1\otimes  x, 1\otimes 1)}
		      \end{split}
		      \end{equation*}
		      where
		      $$\mathbf L^\theta<y_1^k>(x,f ) = \sum_{x_2, \cdots, x_k}\prod_{i=1}^{k-1} g_{\gamma_{x_i}}(y_i)q(x_i,x_{i+1})f(x_{k+1}), \quad x_1 = x$$
		      as in \citet{douc:moulines:12}. Note that assumption \textbf{A1} of \citet{douc:moulines:12} is satisfied with $C = \{1, \cdots, K^*\}$, their set $\mathbf K = \R^d$ and $D = \{1, \cdots, K\}$. Hence, using Proposition 5 of \citet{douc:moulines:12}, we obtain that there exists $1 >\rho>0$ and $c>0$ (independent of $\theta$) such that for all $\eta >0$, when $n$ is large enough
		      \begin{equation*}
		      \begin{split}
		      & \left|\mathbf L^\theta<y_1^k>\otimes \mathbf L^\theta<y_1^k> (1\otimes  x, q_{.,x_{k+1}}\otimes 1- 1 \otimes q_{.,x_{k+1}})\right|\\
		      & \leq \rho^{ck} \left[ \mathbf L^\theta<y_1^k>(1,q_{.,x_{k+1}})\mathbf L^\theta<y_1^k>(x,1)+  \mathbf L^\theta<y_1^k>(x,q_{.,x_{k+1}}) \mathbf L^\theta<y_1^k>(1,1) \right]\\
		      & +2\eta^{k} \prod_{i=2}^k\| \mathbf L^\theta<y_i>(.,1 )\|_\infty^2 \|q_{.,x_{k+1}}\|_\infty L^\theta<y_1>(1,1 )L^\theta<y_1>(x,1 )\\
		      &:= I_1 + I_2
		      \end{split}
		      \end{equation*}
		      The first term leads to an upper bound of order $\rho^{cn}$ in $\Delta_{k,n}(x, x_{k+1})$. The second is controlled in th following way:
		      Since $\| \mathbf L^\theta<y_i>(.,1 )\|_\infty = \max_u \sum_j q_{u,j}g_{\gamma_u}(y_i) = \max_u g_{\gamma_u}(y_i)$ and
		      \begin{equation*}
		      \begin{split}
		       \mathbf L^\theta<y_1^k>(x, 1) &= \sum_{x_2,\cdots, x_k} \prod_{i=2}^kq_{x_{i-1},x_i}g_{\gamma_{x_{i-1}}}(y_{i-1})g_{\gamma_{x_k}}(y_k), \quad x_1 = x\\
		       &\geq \underline q^k g_{\gamma_x}(y_1)\prod_{i=2}^k \max_{l\leq K^*}g_{\gamma_{l}}(y_{i}),
		      \end{split}
		      \end{equation*}
		      we obtain
		      \begin{equation*}
		      \frac{I_2 }{ \mathbf L^\theta<y_1^k>\otimes \mathbf L^\theta<y_1^k> (1\otimes  x, 1\otimes 1)} \leq 2\eta^{k} \underline q^{-k} \prod_{i=2}^k \left(\frac{ \max_u g_{\gamma_u}(y_i) }{ \max_{l\leq K_0}g_{\gamma_{l}}(y_{i}) }\right)^2  \leq 2\frac{\eta^k}{ (V\underline q )^k}
		      \end{equation*}
		      for some $V>0$ under assumption \textbf{A2} and by choosing $\eta$ small enough we finally obtain that there exists $\rho\in (0,1)$ such that
		      \begin{equation}\label{Delta_kn}
		      \Delta_{k,n}(x, x_{k+1})\leq \rho^k .
		      \end{equation}
		      Combining with \eqref{Aqrss'} and the computations of \eqref{upboundprs} we can bound
		      \begin{equation*}
		      A_{q_{r,s}q_{rs'}} \leq \frac{ 2n}{ \underline q^2} (1 + \sum_k \rho^k )\lesssim n.
		      \end{equation*}
		      Similarly, the second term of the right hand side of \eqref{Aqrs} can be bounded by
		      \begin{equation*}
		      \begin{split}
		      & \frac{ 1 }{ q_{r,s}^2 } \sum_{i<j}\mathbb E_\theta\left[  \1_{x_{i-1} = r , x_i = s }\left( \1_{x_{j-1} = r , x_j = s } - p_\theta (x_{j-1} = r , x_j = s |y_{1:n}, x_1= 1) \right) |y_{1:n}, x_1=1\right] \\
		      &\leq  \frac{ 1 }{ q_{r,s}^2 }\sum_i p_\theta\left( x_i=s| x_{i-1}=r , y_{i-1:n}\right) \sum_{k=2}^{n-i}   p_\theta\left( x_{i+k}=s| x_{i+k-1}=r , y_{i+k-1:n}\right) \times \\
		      & \quad \left[  p_\theta\left( x_{i+k-1}=r | y_{i:n}, x_i= r\right)-  p_\theta\left( x_{i+k-1}=r | y_{1:n}, x_1= 1\right)\right] \\
		      &\leq \frac{ C }{\underline q^2 } \sum_i \sum_{k=2}^{n-i } \rho^{k-1} \lesssim n \underline q^{-2}
		      \end{split}
		      \end{equation*}
		      and similarly with the third term of \eqref{Aqrs} . Finally we obtain that with probability going to 1, for all $\theta \in S_n$,
		      $$\ell_n(\theta, x_1=1) - \ell_n(\theta^*,x_1=1) \geq - Cn $$
		      for some constant $C$ large enough.

		      \end{proof}
		      \begin{lem}\label{lem:upb:Bn}
		      Under the hypotheses of Theorem \ref{th:postempty},
		      Let $B_n = \{ \theta \in A_n; \min_{\sigma \in \mathcal S_K } \sum_{i>K^*}\mu_Q(\sigma(i) > M_n u_n\}$ with $M_n$ any sequence going to infinity, then
		      $$\pi(B_n) \lesssim  u_n^{K^*(d+K^*-1)+ \underline{\alpha}(K^*+1)(K-K^*-1)+ d/2} $$
		      \end{lem}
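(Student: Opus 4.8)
Since $B_n$ lives in the parameter space and $A_n$ is a deterministic set (it depends on $\theta^*$ and $n$, not on the data), the quantity $\pi(B_n)$ is a pure prior-mass computation, and the plan is to bound it by a volume argument exploiting that membership in $A_n$ forces $\theta$ into a shrinking neighbourhood of the set $\Theta^*$ of parameters reproducing the true pair-marginal. First I would break the minimum over $\mathcal S_K$ by working in a single fundamental domain, i.e. the labelling in which states $1,\dots,K^*$ carry the matched components and $K^*+1,\dots,K$ are the extra ones; because $\mathcal S_K$ is finite this only costs a multiplicative constant, so it suffices to bound $\pi(\{\theta\in A_n:\ \sum_{j>K^*}\mu_Q(j)>M_nu_n\})$ for this labelling.

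The second step is to make the closeness constraint quantitative. On $A_n$ the pair-marginal $f_2(y_1,y_2|\theta,\mu_Q)=\sum_{i,j}\mu_Q(i)q_{ij}g_{\gamma_i}(y_1)g_{\gamma_j}(y_2)$ differs from $f_2(\cdot|\theta^*,\mu_{Q^*})$ by at most $\epsilon_n:=\sqrt{\log n}/w_n$ in $L_1$. Writing each extra mean $\gamma_j$ ($j>K^*$) as a perturbation of the true mean $\gamma^*_{\kappa(j)}$ it merges with and Taylor-expanding $g_\gamma$ through second order as in \citet{Rousseau2011}, the identifiability assumption \textbf{A3} (with the regularity \textbf{A1} controlling the remainder and the offset \textbf{A2} ruling out cancellation by masses escaping to $\Gamma_0$) gives a positive-definite quadratic lower bound: $\epsilon_n\gtrsim\|f_2(\theta)-f_2(\theta^*)\|_1$ dominates simultaneously the squared distance of the matched block $(Q,\gamma_{1:K^*})$ to $(Q^*,\gamma^*_{1:K^*})$, the masses of any components escaping to the offset region, and the weighted second moments $\sum_{j>K^*}\mu_Q(j)\,|\gamma_j-\gamma^*_{\kappa(j)}|^2$ of the merging clouds. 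Since $M_nu_n\gg\epsilon_n$, the required extra mass cannot be supplied by offset components and must be carried by states merging with true means, subject to this second-moment constraint.

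The third step is the prior-mass count under the factorisation $\pi(d\theta)=\pi_Q(dQ)\prod_j\pi_\gamma(d\gamma_j)$. Confining the matched block to an $\epsilon_n$-ball of its $K^*(d+K^*-1)$-dimensional parameter space contributes a factor of order $\epsilon_n^{K^*(d+K^*-1)}$, using the non-degeneracy of the Fisher information in \textbf{A1} and the boundedness of both $\pi_\gamma$ and the Dirichlet density near $Q^*$ (where $q^*_{ij}>0$). For the extra states the decisive device is to transfer the stationary-weight constraint to the transition entries: near the emptying configuration $\mu_Q(j)=\sum_{i\le K^*}\mu_{Q^*}(i)q_{ij}$ up to higher-order terms for $j>K^*$, so that $\sum_{j>K^*}\mu_Q(j)$ is, to leading order, a linear functional of the small off-block entries, each carrying a Dirichlet$(\underline\alpha)$ density behaving like $q_{ij}^{\underline\alpha-1}$ at the origin. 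Integrating this Dirichlet factor over the weight constraint against the location volumes $\prod_j|\gamma_j-\gamma^*_{\kappa(j)}|^d\lesssim(\epsilon_n/\mu_Q(j))^{d/2}$ permitted by the second-moment bound, and optimising over the split of the extra mass, the cheapest configuration keeps $K-K^*-1$ extra states essentially emptied—each requiring its $K^*+1$ incoming entries from the mass-carrying states to be small, hence a power $\underline\alpha(K^*+1)$—while a single extra state carries the bulk of the mass through a second-order merge, contributing the $d/2$. Collecting $K^*(d+K^*-1)$, $\underline\alpha(K^*+1)(K-K^*-1)$ and $d/2$ reproduces the exponent of $u_n$, once the scales $u_n$ and $\epsilon_n$ are taken in the relation fixed in the main proof.

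The main obstacle is the HMM-specific nonlinearity absent from the i.i.d. mixture analysis of \citet{Rousseau2011}: the stationary law $\mu_Q$ is only an implicit, nonlinear function of $Q$, and the pair-marginal mixes over ordered pairs with the coupled weights $\mu_Q(i)q_{ij}$ rather than free weights. Consequently the quadratic lower bound of the second step must be established for this coupled pair model, and the linearisation $\mu_Q(j)\approx\sum_{i\le K^*}\mu_{Q^*}(i)q_{ij}$ of the third step must be justified with remainders controlled uniformly over $A_n$. I expect pinning down the precise exponent $B$—in particular identifying the least-costly split of the extra stationary mass, which trades the Dirichlet weight cost $\underline\alpha$ against the $d$-dimensional location cost and decides how many entries per emptied column must be forced small—to be the most delicate part of the computation.
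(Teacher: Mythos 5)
Your plan follows essentially the same route as the paper's proof: reduce to a fixed labelling, use the identifiability condition \textbf{A3} (via eq.~(26) of Gassiat and Rousseau, 2012) to convert $L_1$-closeness of the pair-marginals into explicit constraints on $(\mu_Q,Q,\gamma)$, partition according to which extra states merge with which true ones, and count prior mass with the dominant configuration being one merged extra state (contributing $u_n^{d/2}$ through the weighted second-moment constraint) and $K-K^*-1$ emptied ones (contributing $u_n^{\underline{\alpha}(K^*+1)(K-K^*-1)}$) on top of the matched block's $u_n^{K^*(d+K^*-1)}$. The delicate points you flag --- the nonlinearity of $\mu_Q$ in $Q$ and identifying the least costly split of the extra stationary mass --- are precisely where the paper's explicit computation (the case $K_t=K^*+1$ and the integral $J_n$) is carried out.
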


		      \begin{proof}[Proof of Lemma \ref{lem:upb:Bn}]
		      We set
		      $$ A_n=\{ || f_{2}(.|\theta)-f_{2}(.|\theta^*)||_1 \lesssim u_n \}, \quad u_n=e_n^{-1} \sqrt{\log n} \left(\frac{1}{\sqrt{n}} \right)^{1-A} $$
		      where $A = [K^*( K^*-1+d)+ \underline \alpha K(K-K^*)]/(p\bar \alpha + (K-p) \underline \alpha )$.

		      We are now interested in considering the subset of $A_n$ corresponding to not emptying completely the extra components. We use (26) of \citet{Gassiat2012}.  Hence there exists  $\epsilon >0$ such that  for all $i \leq K^*$, defining $A(i) = \{ j; \|\gamma_j - \gamma_i^*\|\leq \epsilon\}$ and writing $\Gamma^*=\{\gamma_i^*, i\leq K^*\}$, we have
		      \begin{align}\label{eq:3_1}
		      u_n &\gtrsim \sum_{j:||\gamma_1-\Gamma^*||_1 >\epsilon } \mu_Q(j) +\sum_{i_1, i_2}|\sum_{j_1\in A(i_1), j_2\in A(i_2)} \mu_Q(j_1) q_{j_1, j_2} - \mu^*(i_1)q^*_{i_1,i_2}| \\
		      &+ \sum_{i_1, i_2} |\sum_{j_1 \in A(i_1), j_2 \in A(i_2)} \mu_Q(j_1) q_{j_1, j_2}(\gamma_{j_1}, \gamma_{j_2})^T - \mu^*(i_1)q^*_{i_1, i_2}(\gamma_{i_1}^*, \gamma^*_{i_2})^T |  \nonumber     \\
		      &+ \sum_{i_1, i_2} \sum_{j_1 \in A(i_1), j_2 \in A(i_2)}  \mu_Q(j_1)q_{j)1, j_2} ( ||\gamma_{j_1} - \gamma_{i_1}^* ||^2 +||\gamma_{j_2} - \gamma_{i_2}||^2).\nonumber
		      \end{align}
		      We can thus partition $A_n$ into all the possibilities of constructing $(A(i), i \leq K^*)$.
		      For each of these partition, we express the event corresponding with merging at least two components. To begin with, to simplify the presentation of the calculations and without loss of generality we set $K_t$, the  cardinal of $\cup_{i\leq K_0}A(i)$ so that if $\theta \in A_n$, $K_t \geq K^*+1$. To simplify the notations we assume that  $\cup_{i\leq K^*}A(i) = \{1,\dots, K_t\}$, and that the indices belonging to $A(i)$ are all smaller than those belonging to $A(i+1)$ for all $K^* \geq i \geq 1$.

		      \underline{Case where $K_t=K^*+1$}

		       This means that component $\{K^*+1\}$ (up to a permutation) belongs to one of the  $A(i)$'s, which we set to be $K^*$. Thus, $A(i)=\{i\}$ for $i\leq K^*-1$ and $A(K^*)=\{K^*, K^*+1\}$.

		      Then, \eqref{eq:3_1} implies, for all $i_1, i_2 \leq K^*-1$,
		      $$|\mu_Q(i_1)-\mu^*(i_1)|\lesssim u_n  , \quad |q_{i_1, i_2}-q^*_{i_1, i_2}|\lesssim u_n, \quad \sum_{j>K^*+1} \mu_Q(j)\lesssim u_n$$.
		      Since we can assume without loss of generality that $\mu_Q(K^*) \geq \mu_Q(K^*+1)$ (up to relabelling)
		      and since $|\mu_Q(K^*)+\mu_Q(K^*+1)- \mu^*(K^*)| \lesssim u_n$, using $\mu_QQ=\mu_Q$, this implies that
		      $$ \forall i \leq K^*, \forall j > K^*+1,\quad q_{i,j} \lesssim u_n, \quad q_{K^*+1, j} \lesssim \frac{u_n}{\mu_Q(K^*+1)}$$

		      We also have that for all $i\leq K^*-1$ $$q_{K^*,i}=(\mu^*(K^*)q^*_{K^*,i}-q_{K^*+1,i}\mu_Q(K^*+1))/(\mu^*(K^*)-\mu_Q(K^*+1)) \pm O(u_n), $$
		      and for all $i\leq K^*-1$
		      $$ ||\gamma_i -\gamma^*_i ||\lesssim u_n,$$
		      $$||\gamma_{K^*}+\mu_Q(K^*+1)\gamma_{K^*+1}-\mu^*(K^*)\gamma^*_{K^*}||\lesssim u_n$$
		      $$\mu_Q(K^*+1)||\gamma_{K^*+1}-\gamma^*_{K^*} ||^2\lesssim u_n .$$

		      Since, if in addition $\theta \in  B_n$, then  $\mu_Q(K^*+1)\geq v_n$.  We denote $\pi_{n,1}$ the prior mass of $B_n$ intersecting with the set of $\theta$'s satisfying the partition $A(i) = i$, for all $i \leq K^*-1$ and $A(K^*) = \{K^*, K^*+1\}$. Then
		      $$\pi_{n,1}\lesssim  u_n^{dK^*}u_n^{K^*(K^*-1)} u_n^{\underline{\alpha}(K^*+1)(K-K^*-1)}u_n^{d/2} \times J_n$$
		      with $C_n = \{ \sum_{i \leq K^*}q_{iK^*+1} \gtrsim  v_n(1 - q_{K^*+1,K^*+1})\}$ and
		      \begin{equation*}
		      \begin{split}
		      J_n &=  \int_{C_n} \prod_{i \leq K^*+1}\frac{  q_{i, K^*+1}^{\underline \alpha-1} }{ \left( \sum_{j \leq K^*}q_{i,K^*+1}\right)^{ d/2 +\underline \alpha(K-K^*-1) }}(1 - q_{K^*+1,K^*+1})^{ d/2 +\underline \alpha( K-K^*-1) + p\bar \alpha + (K^*-p)\underline \alpha -1 } dq_{iK^*+1} \\
		      &\lesssim \int_{\sum_j x_j \gtrsim v_n} \frac{ \prod_{i\leq K^*} x_i^{\alpha_i-1} }{ (\sum_j x_j)^{d/2 + \underline \alpha (K-K^*-1)}}dx_i \times
		      \int_0^1q^{\underline \alpha -1} (1-q)^{p\bar \alpha +2(K^*-p)\underline \alpha  -K^*-1}dq
		      \end{split}
		      \end{equation*}
		      Hence if $p\bar \alpha +2(K^*-p)\underline \alpha >K^*$,
		      \begin{equation*}
		      \begin{split}
		      J_n
		      &\lesssim \int_{\sum_j x_j \gtrsim v_n} \frac{ \prod_{i\leq K^*} x_i^{\alpha_i-1} }{ (\sum_j x_j)^{d/2 + \underline \alpha (K-K^*-1)}}dx_i \lesssim  v_n^{-d/2-\underline{\alpha}(K-2K^*-1)}
		      \end{split}
		      \end{equation*}
		      This leads to, with $B= K^*(d+K^*-1)+ \underline{\alpha}(K^*+1)(K-K^*-1)+ d/2$
		      \begin{align*}
		      \pi_{n,1} \lesssim & u_n^{B}v_n^{-d/2-\underline{\alpha}(K-2K^*-1)} = (\log n)^{B/2}n^{-(1-A)B/2} e_n^{-1}v_n^{-d/2-\underline{\alpha}(K-2K^*-1)},
		      \end{align*}
		      for any $e_n= o(1) $. Hence $\pi_{n,1} = e_n n^{-(K^* (K^*-1+d)+ \underline \alpha K( K-K^*))/2} $ if $(1-A)B>K^* (K^*-1+d)+ \underline \alpha K( K-K^*)$, under a suitable choice of $v_n$.
		      \begin{equation*}
		      \begin{split}
		      & (K^*(d+K^*-1)+ \underline{\alpha}(K^*+1)(K-K^*-1)+ d/2)(1-A) > K^* (K^*-1+d)+ \underline \alpha K( K-K^*)\\
		      \Leftrightarrow & d/2 - A(K^*(d+K^*-1)+ \underline{\alpha}(K^*+1)(K-K^*-1)+ d/2)> \underline \alpha[ (K-K^*)^2 -(K-2K^*-1)]\\
		      \Leftrightarrow & \frac{ (K^* (K^*-1+d)+ \underline \alpha K( K-K^*))(K^*(d+K^*-1)+ \underline{\alpha}(K^*+1)(K-K^*-1)+ d/2)  }{ p\bar \alpha + (K-p) \underline \alpha } \\
		      & \qquad  < d/2 - \underline \alpha[ (K-K^*)^2 -(K-2K^*-1)]
		      \end{split}
		      \end{equation*}
		       This is satisfied if $d/2 > \underline \alpha[ (K-K^*)^2 -(K-2K^*-1)]$ and
		       \begin{equation*}
		        p\bar \alpha + (K-p) \underline \alpha > \frac{ (K^* (K^*-1+d)+ \underline \alpha K( K-K^*))(K^*(d+K^*-1)+ \underline{\alpha}(K^*+1)(K-K^*-1)+ d/2)  }{ d/2 - \underline \alpha[ (K-K^*)^2 -(K-2K^*-1)]}.
		       \end{equation*}

		      Using similar, but more tedious computations we can prove that when $K_t > K^*+1$, the prior mass is  smaller than $ u_n^{B}v_n^{-d/2-\underline{\alpha}(K-2K^*-1)}$, which concludes the proof of Lemma \ref{lem:upb:Bn}.

		  \end{proof}

		\section*{Gibbs sampler for overfitted HMMs}\label{sec:AppendixHMM_Gibbs}
		A Gibbs sampler is set up on the augmented parameter space, $$p(X, \gamma, Q | Y) \propto p(Y|X,  \gamma ,Q )p(X|\gamma ,Q ) \pi(\gamma) \pi(Q)$$  as desribed by \citet{Fruhwirth-Schnatter2006}. The prior on the emission means is set to $\pi(\gamma)\sim \mathcal{N}(\gamma_0=\bar{Y}, \tau_0=100$, where $\bar{Y}$ is the observed sample mean. The prior on each row of $Q$ follows a Dirichlet distribution of the form $\mathcal{D}(\bar{\alpha}, \underline{\alpha}, \dots, \underline{\alpha})$.

		The Gibbs sampler is straightforward, keeping in mind the initial distribution of the hidden states is assumed to be equal to the ergodic distribution, resulting in a stationary Markov chain. As the rows of $Q$ are no longer independent \textit{a posteriori}, a Metropolis-Hastings step is incorporated into a standard the Gibbs sampler (note Step 2.a.i). $X^{{j}^m}_0$ denotes the estimated initial state of chain $j$ at iteration $m$.

		\paragraph{Gibbs sampler for overfitted HMMs}
		\begin{enumerate}
		\item Initialise:
		    \begin{enumerate}
		    \item Chose  $\underline{\alpha}$ and $K$, and set $\bar{\alpha}$ so that $\pi(q_{i.})\sim \mathcal{D}=\left(\bar{\alpha}, \underline{\alpha},\dots,  \underline{\alpha}\right)$
		    \item Choose a set of starting values for the allocations,   $X^0$
		    \end{enumerate}
		\item Step $m$, for each iteration $m=1,\dots, M$:
		    \begin{enumerate}
		    \item \textit{Gibbs Sampling}.
		        \begin{enumerate}
		        \item Generate transition matrix $Q^{m}$ given previous states from $p(Q^{m}| X^{m-1})$,
		            \begin{itemize}
		            \item Accept $Q^{m}$ with probability $min \left(1,\dfrac{\mu^{m}_{ X^{m}_0 }}{\mu^{m-1}_{ X^{m}_0}}\right)$
		            \end{itemize}
		        \item Generate $\theta_k^m$ from $p(\theta_k^m|Y, X^{m-1})$ for each $k=1,\dots, K$ (in our case, only the emissions means, $\gamma_j$, need be estimated here)
		        \item Generate $p(\textbf{X}|Q^{m},\gamma^{m}, Y)$ using the forward backward algorithm \citep{Cappe2005, Fruhwirth-Schnatter2006}.
		        \end{enumerate}
		    \end{enumerate}
		\end{enumerate}

\section*{Figures}

    \begin{figure}[htb]
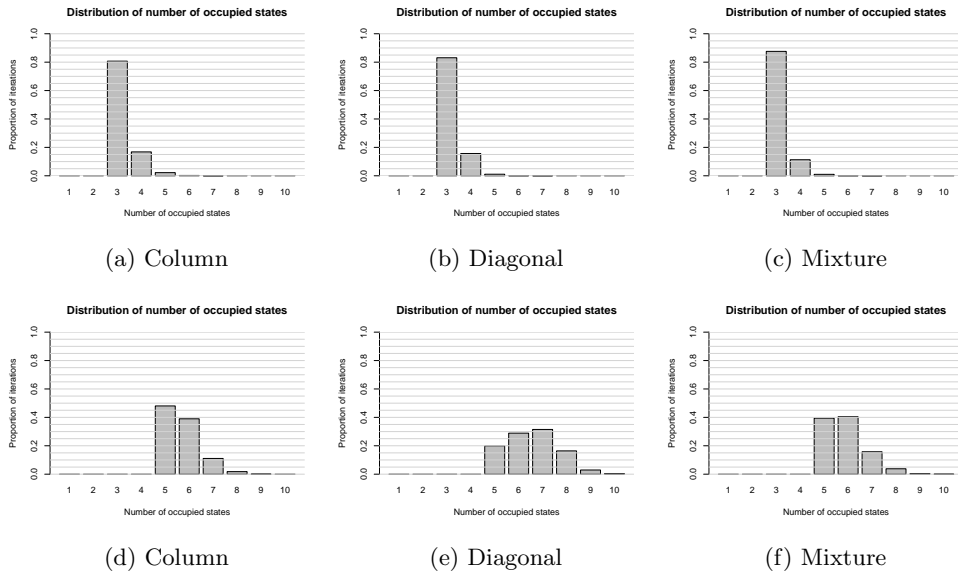

    \centering\begin{subfigure}{0.32\textwidth}
    \adjincludegraphics[trim={0 {.5\height} 0 0},clip,width=1\linewidth]{Fig4_1}
    \caption{Column}
    \end{subfigure}
      \begin{subfigure}{0.32\textwidth}
    \adjincludegraphics[trim={0 {.5\height} 0 0},clip,width=1\linewidth]{Fig4_2}
    \caption{Diagonal}
    \end{subfigure}
    \begin{subfigure}{0.32\textwidth}
    \adjincludegraphics[trim={0 {.5\height} 0 0},clip,width=1\linewidth]{Fig4_3}
    \caption{Mixture}
    \end{subfigure} \\
    \begin{subfigure}{0.32\textwidth}
    \adjincludegraphics[trim={0 {.5\height} 0 0},clip,width=1\linewidth]{Fig4_4}
    \caption{Column}
    \end{subfigure}
      \begin{subfigure}{0.32\textwidth}
    \adjincludegraphics[trim={0 {.5\height} 0 0},clip,width=1\linewidth]{Fig4_5}
    \caption{Diagonal}
    \end{subfigure}
    \begin{subfigure}{0.32\textwidth}
    \adjincludegraphics[trim={0 {.5\height} 0 0},clip,width=1\linewidth]{Fig4_6}
    \caption{Mixture}
    \end{subfigure} 
     \caption[Small sample demonstration: bivariate posterior densities]{Distribution of the number of occupied states produced from fitting a HMM with $K=10$ states to $n=100$ observations from a simulated HMM from Sim 2 and Sim 3. Each row includes a plot for each choice of prior. }  
     \label{fig:HMM_Figure_6}
\end{figure}

\end{document}